\newcommand{\cC}{\mathcal{C}}
\newcommand{\cE}{\mathcal{E}}
\newcommand{\cT}{\mathcal{T}}
\newcommand{\mathset}[1]{\left\{#1\right\}}
\newcommand{\abs}[1]{\left|#1\right|}
\newcommand{\floorenv}[1]{\left\lfloor #1 \right\rfloor}
\newcommand{\parenv}[1]{\left( #1 \right)}
\newcommand{\be}[1]{\begin{equation}\label{#1}}
\newcommand{\ee}{\end{equation}}
\renewcommand{\leq}{\leqslant}
\renewcommand{\geq}{\geqslant}
\renewcommand{\Bbb}{\mathbb}
\newcommand{\Cref}[1]{Co\-ro\-lla\-ry\,\ref{#1}}
\renewcommand{\Bbb}{\mathbb}
\newcommand{\C}{{\Bbb C}}
\newcommand{\R}{{\Bbb R}}
\newcommand{\Z}{{\Bbb Z}}
\newcommand{\Q}{{\Bbb Q}}
\newcommand{\km}{k_{-}}
\newcommand{\kp}{k_{+}}
\newcommand{\leg}[2]{\genfrac{(}{)}{}{}{#1}{#2}}
\newcommand{\qr}{\text{QR}}
\newcommand{\nr}{\text{QNR}}
\newcommand{\ove}{\overline{e}}
\newcommand{\ovx}{\overline{x}}
\newcommand{\ovy}{\overline{y}}
\newcommand{\ovt}{\overline{t}}
\newtheorem{theorem}{Theorem}
\newtheorem{lemma}[theorem]{Lemma}
\newtheorem{corollary}[theorem]{Corollary}
\newdefinition{definition}{Definition}
\newdefinition{remark}{Remark}
\newdefinition{const}{Construction}
\newdefinition{example}{Example}
\begin{document}

\journal{Journal of Combinatorial Theory Ser.~A}

\begin{frontmatter}

\title{On the Non-existence of Lattice Tilings by Quasi-crosses\tnoteref{lbl1}}
\tnotetext[lbl1]{This work was supported in part by ISF grant 134/10.}

\author{Moshe Schwartz\fnref{lbl2}}
\ead{schwartz@ee.bgu.ac.il}
\address{Department of Electrical and Computer Engineering, Ben-Gurion
University of the Negev, Israel} 
\fntext[lbl2]{The author is on a sabbatical at the Department of Electrical
Engineering, MIT, Research Laboratory of Electronics.}


\begin{abstract}
We study necessary conditions for the existence of lattice tilings of
$\R^n$ by quasi-crosses. We prove non-existence results, and focus in
particular on the two smallest unclassified shapes, the
$(3,1,n)$-quasi-cross and the $(3,2,n)$-quasi-cross. We show that for
dimensions $n\leq 250$, apart from the known constructions, there are
no lattice tilings of $\R^n$ by $(3,1,n)$-quasi-crosses except for ten
remaining cases, and no lattice tilings of $\R^n$ by
$(3,2,n)$-quasi-crosses except for eleven remaining cases.
\end{abstract}

\begin{keyword}
tiling \sep lattices \sep quasi-cross \sep group splitting
\MSC[2010] 05B45 \sep 52C22
\end{keyword}

\end{frontmatter}


\section{Introduction}
\label{sec:introduction}

Problems involving tilings of $\R^n$ by clusters of cubes have a long
history, as is evident from the early work of Minkowski \cite{Min07}.
In this context, let
\[Q=\mathset{(x_1,\dots,x_n) ~|~ 0\leq x_i < 1, x_i\in\R}\]
denote the \emph{unit cube}, which, we shall also say, is
\emph{centered at the origin}. A \emph{translate} of the cube
by a vector $\ove\in\R^n$ is the set
\[\ove+Q=\mathset{\ove+\ovx ~|~ \ovx\in Q},\]
and a \emph{cluster of cubes} is a union of
disjoint translates of cubes
\[\cC=\cE+Q=\mathset{\ove+Q ~|~ \ove\in\cE},\]
for some $\cE\subseteq \R^n$.

A set of disjoint translates of $\cC$ is called a \emph{packing}
of $\R^n$ by $\cC$. If the union of the translates is the entire space
$\R^n$, we say it is a \emph{tiling}. If the set of translates forming
the packing (tiling) forms an additive subgroup of $\Z^n$, we shall say
it is a lattice\footnote{
This is, in fact, an \emph{integer} lattice, but we shall omit
this throughout the paper.} packing (lattice tiling).

Several types of clusters have been considered in the past. The two
most studied clusters are the \emph{$(k,n)$-cross} and the
\emph{$(k,n)$-semi-cross}.  The $(k,n)$-cross is defined by the
following set of translates
\[\cE_{\mathrm{cross}}=\mathset{i\ove_j\in\R^n ~|~ i\in[-k,k], j\in[n]}\]
where $[a,b]=\mathset{a,a+1,\dots,b}\subseteq\Z$, $[a]$ is short for
$[1,a]$, and $\ove_j$ is the $j$-th standard unit vector. That is, a
$(k,n)$-cross contains a center cube, and arms of length $k$ cubes in
the positive and negative directions along each axis.  In contrast,
the $(k,n)$-semi-cross has arms only in the positive direction and is
defined by
\[\cE_{\mathrm{semi-cross}}=\mathset{i\ove_j\in\R^n ~|~ i\in[0,k], j\in[n]}\]

Packings (lattice and non-lattice) of $\R^n$ by crosses and
semi-crosses were studied by Stein \cite{Ste84}, and Hickerson and
Stein \cite{HicSte86}. For an excellent survey the reader is referred
to \cite{SteSza94}. We also note that a $(1,n)$-cross is also a Lee
sphere of radius $1$. Apart from radius $1$ or dimension $2$, the
non-existence of tilings of $\R^n$ by Lee spheres is a long-standing
conjecture by Golomb and Welch \cite{GolWel70} (see
\cite{Pos75,Hor09a,Hor09b,Etz11}, as well as the more recent
\cite{HorAlb12} for a survey on the current status of the conjecture).

Motivated by an application to error-correcting codes for non-volatile
memories, Schwartz \cite{Sch12} suggested a generalization of both the
cross and semi-cross to a shape called the
\emph{$(\kp,\km,n)$-quasi-cross} defined by the set of translations
\[\cE_{\mathrm{quasi-cross}}=\mathset{i\ove_j\in\R^n ~|~ i\in[-\km,\kp], j\in[n]}.\]
Namely, in a $(\kp,\km,n)$-quasi-cross the center cube has arms of
length $\kp$ in the positive direction, and arms of length $\km$ in
the negative direction (see Figure \ref{fig:exampleqc}). Thus,
a $(k,0,n)$-quasi-cross is simply a $(k,n)$-semi-cross, while a 
$(k,k,n)$-quasi-cross is a $(k,n)$-cross. To avoid the two studied cases
we shall assume throughout that $1\leq \km < \kp$.

\begin{figure}[ht]
\begin{center}
\includegraphics[scale=0.7]{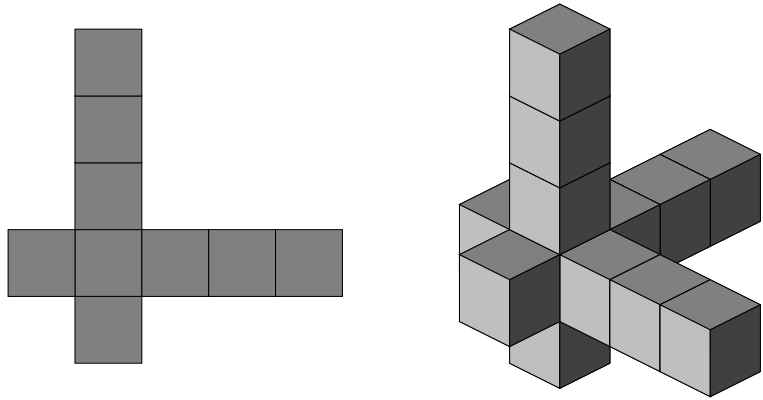}
\end{center}
\caption{A $(3,1,2)$-quasi-cross and a $(3,1,3)$-quasi-cross}
\label{fig:exampleqc}
\end{figure}

A few constructions were given in \cite{Sch12} for lattice tilings of
$\R^n$ by quasi-crosses, and in particular, a full classification was
provided of the dimensions in which there exist lattice tilings by
$(2,1,n)$-quasi-crosses. Recently, Yari et al.~\cite{YarKloBos13} gave
other constructions for lattice packings and tilings by quasi-crosses,
and in particular, new constructions for tilings by
$(3,1,n)$-quasi-crosses.

The motivation given in \cite{Sch12} is that of producing perfect
$1$-error-correcting codes for the unbalanced limited magnitude
channel, a natural extension to the earlier work of
\cite{CasSchBohBru10}. The dual case of $(n-1)$-error-correcting codes
gives rise to a tiling problem of cluster of cubes called a ``chair'',
which is described in \cite{BuzEtz12}.

The goal of this work is to derive new necessary conditions for the
existence of tilings of $\R^n$ by quasi-crosses. Though most of the
results apply to general $(\kp,\km,n)$-quasi-crosses, we shall focus
in particular on the two smallest unclassified cases of the
$(3,1,n)$-quasi-cross and the $(3,2,n)$-quasi-cross.

The paper is organized as follows: We begin in Section
\ref{sec:prelim} by providing the notation and definitions used
throughout the paper.  We shall also cite relevant results from
previous works. We continue in Section \ref{sec:main} with a list of
the main results. We conclude in Section \ref{sec:conc} with the
application of the main results to the specific case of tilings
by $(3,1,n)$-quasi-crosses and tilings by $(3,2,n)$-quasi-crosses.


\section{Preliminaries}
\label{sec:prelim}

We shall now describe the definitions and notation used in this work.
For the reader's benefit we repeat some of the definitions given in
the introduction. A \emph{cube} is defined as the set
\[Q=\mathset{(x_1,\dots,x_n) ~|~ 0\leq x_i < 1, x_i\in\R}.\]
A set of pair-wise disjoint translates of the cube is a \emph{cluster
  of cubes}
\[\cC=\cE+Q=\mathset{\ove+Q ~|~ \ove\in\cE},\]
for some $\cE\subseteq \R^n$ specifying the translate vectors. Through
the paper we shall use only integer translate vectors, i.e.,
$\cE\in\Z^n$.

We denote $[a,b]=\mathset{a,a+1,\dots,b}\subseteq\Z$, $[a]$ is short
for $[1,a]$, and $[a,b]^*=[a,b]\setminus\mathset{0}$. For any two
positive integers $1\leq \km<\kp$, the $(\kp,\km,n)$-quasi-cross is
the cluster of cubes defined by the translate vectors
\[\cE_{\mathrm{quasi-cross}}=\mathset{i\ove_j\in\R^n ~|~ i\in[-\km,\kp], j\in[n]}.\]

Let $\cT\subseteq\R^n$ be a set of vectors, and let
$\cC_{\mathrm{quasi-cross}}$ be a $(\kp,\km,n)$-quasi-cross cluster of
cubes centered at the origin. If the translates
$\ovt+\cC_{\mathrm{quasi-cross}}$, $\ovt\in\cT$, are pair-wise
disjoint, we say $\cT$ is a \emph{packing} of $\R^n$ by
$(\kp,\km,n)$-quasi-crosses. If
\[\bigcup_{\ovt\in\cT}\parenv{\ovt+\cC_{\mathrm{quasi-cross}}}=\R^n\]
we say $\cT$ is a \emph{tiling} of $\R^n$ by
$(\kp,\km,n)$-quasi-crosses.  If $\cT$ is an additive subgroup of
$\Z^n$ then we shall call $\cT$ a lattice, and will use the letter
$\Lambda$ instead of $\cT$ to denote it.

Finally, the \emph{primorial} is defined as
\[n\#=\prod_{\substack{\text{$p$ prime}\\ p\leq n}} p.\]

\subsection{Abelian-Group Splitting and Lattice Tiling}

While we may use geometric arguments to prove necessary conditions for
a shape to tile $\R^n$, stronger results may be obtained using the
algebraic structure of a \emph{lattice} tiling. An equivalence between
lattice tilings and Abelian-group splitting was described in
\cite{Ste67,Ste84,HicSte86}, which we describe here for
completeness.

Let $G$ be an finite Abelian group, where we shall denote the group
operation as $+$. Given some $s\in G$ and a non-negative integer
$m\in\Z$, we denote by $ms$ the sum $s+s+\dots+s$, where $s$ appears
in the sum $m$ times. The definition is extended in the natural way to
negative integers $m$.

A splitting of $G$ is a pair of sets,
$M\subseteq\Z\setminus\mathset{0}$, called the \emph{multiplier set},
and $S=\mathset{s_1,s_2,\dots,s_n}\subseteq G$, called the
\emph{splitter set}, such that the elements of the form $ms$, $m\in
M$, $s\in S$, are all distinct, non-zero, and cover all the non-zero
elements in $G$. We shall denote such a splitting as $G=(M,S)$. It
follows that $\abs{M}\cdot\abs{S}=\abs{G}-1$.

Next, we define a homomorphism $\phi:\Z^n\rightarrow G$ by
\[\phi(x_1,x_2,\dots,x_n)=\sum_{i=1}^{n} x_i s_i.\]
If the multiplier set is $M=[-\km,\kp]^*$, then it may be easily
verified that $\ker \phi$ is a lattice tiling of $\R^n$ by
$(\kp,\km,n)$-quasi-crosses. The fact that $\ker \phi$ is a lattice is
obvious. To show that the lattice is a packing by
$(\kp,\km,n)$-quasi-crosses, assume to the contrary two such distinct
quasi-crosses, one centered at $\ovx=(x_1,\dots,x_n)$ and one centered
at $\ovy=(y_1,\dots,y_n)$, have a non-empty intersection, i.e.,
\[\ovx+m_1\ove_i=\ovy+m_2 \ove_j,\]
where $m_1,m_2\in M$, then
\[m_1 s_i=\phi(\ovx+m_1 \ove_i)=\phi(\ovy+m_2 \ove_j)= m_2 s_j\]
which is possible only if $m_1=m_2$ and $i=j$, resulting in the two
quasi-crosses being the same one, a contradiction.

Finally, to show that the packing is in fact a tiling let
$\ovx\in\R^n$ be some point in the space. Obviously, $\ovx\in
\floorenv{\ovx}+Q$. If $\phi(\floorenv{\ovx})=0$ then
$\floorenv{\ovx}\in\ker\phi$ and $\ovx$ is in the
$(\kp,\km,n)$-quasi-cross cube cluster centered at
$\floorenv{\ovx}$. Otherwise, by the properties of the splitting there
exist $m\in M$ and $s_i\in S$ such that
$\phi(\floorenv{\ovx})=ms_i$. It follows that
$\floorenv{\ovx}-m\ove_i\in\ker\phi$ and $\ovx$ is in the
$(\kp,\km,n)$-quasi-cross cube cluster centered at
$\floorenv{\ovx}-m\ove_i$.

Group splitting as a method for constructing error-correcting codes
was also discussed, for example, in the case of shift-correcting codes
\cite{Tam98} and integer codes \cite{Tam05}.

\subsection{Previous Results}

Several results from previous works are relevant to this one. Some
apply directly to quasi-crosses, while others will be used as a basis
for our new results, appearing in the next section. The first theorem
we cite is the only one which uses geometric arguments to derive a
necessary condition on lattice tilings of $\R^n$ by quasi-crosses.

\begin{theorem}\cite[Theorem 9]{Sch12}
\label{th:geometry}
For any $n\geq 2$, if
\[\frac{2\kp(\km+1)-\km^2}{\kp+\km}>n,\]
then there is no lattice tiling of $\R^n$ by
$(\kp,\km,n)$-quasi-crosses.
\end{theorem}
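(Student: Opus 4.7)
My plan is to argue by contradiction. Suppose a lattice tiling $\Lambda$ of $\R^n$ by $(\kp,\km,n)$-quasi-crosses exists while the hypothesis holds; rearrangement shows this is equivalent to the volume estimate $V := 1+n(\kp+\km) < (\km+1)(2\kp-\km+1)$.

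The first step is the standard one-dimensional observation. Set $a_1 := \min\{c>0 : c\ove_1 \in \Lambda\}$. If the crosses at the origin and at $a_1\ove_1$ are to have disjoint axis-$1$ arms, then $a_1 \geq \kp+\km+1$. Consider the ``gap'' cubes on axis $1$ at positions $p\ove_1$ for $p \in \{\kp+1, \ldots, a_1-\km-1\}$. Each such cube must lie in a cross whose center is off axis $1$; non-overlap with the origin cross $\cC$ then forces the covering cross to have a $2$-sparse center of the form $(p,0,\ldots,c_j,\ldots,0)$ with $j \in \{2,\ldots,n\}$ and $c_j \in [-\kp,\km]^*$, chosen so that the axis-$j$ arm tip reaches $p\ove_1$. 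Two such covering crosses sharing the same $(j,c_j)$ would differ by a nonzero multiple of $a_1\ove_1$ of magnitude strictly less than $a_1$, contradicting the minimality of $a_1$; hence the map $p \mapsto (j,c_j)$ is injective on the gap. Counting possibilities yields the preliminary estimate $a_1 - (\kp+\km+1) \leq (n-1)(\kp+\km)$, i.e., $a_1 \leq V$.

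To strengthen this to the sharp bound $V \geq (\km+1)(2\kp-\km+1)$ needed for the contradiction, I plan to further examine the cubes adjacent to the long arm of $\cC$, specifically those at positions $i\ove_1 \pm \ove_j$ for $i \in [1,\kp]$ and $j \in \{2,\ldots,n\}$. A direct case analysis, eliminating all admissible cross centers that would overlap $\cC$, shows that only $V-2\kp-\km$ cross positions can cover any given such cube. Combining the constraints imposed by these ``side'' covers with the gap-cover constraints pins down a two-dimensional sublattice, namely $\Lambda \cap (\Z^2 \times \{0\}^{n-2})$, which must avoid an explicitly described forbidden region in the $(x_1,x_2)$-plane. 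This region is the union of two tilted rectangles of dimensions $(\kp+\km) \times (\kp+\km)$ that encode the asymmetry between the long ($\kp$) and short ($\km$) arms. A Minkowski-type lower bound on the determinant of this 2D sublattice, computed directly from the geometry of the non-convex forbidden region, will then produce the factor $(\km+1)(2\kp-\km+1)$ and yield the contradiction.

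The principal obstacle is bookkeeping precise enough to extract the exact factor $(\km+1)(2\kp-\km+1)$ rather than a weaker bound: the 2D forbidden region is not convex, so Minkowski's theorem cannot be applied off the shelf, and instead the two rectangular components must be handled separately while accounting correctly for their intersection. I expect that this careful decomposition, together with the injectivity used in the gap argument, is exactly what converts the weak bound $a_1 \leq V$ into the claimed sharp inequality.
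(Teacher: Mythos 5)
First, note that the paper does not prove this statement: it is quoted verbatim from \cite[Theorem 9]{Sch12} and used as a black box, so there is no in-paper proof to compare against. Judged on its own terms, your proposal has a genuine gap. Your reduction is correct --- the hypothesis is equivalent to $V:=n(\kp+\km)+1<(\km+1)(2\kp-\km+1)$, and since a lattice tiling forces $\det\Lambda=V$, it would suffice to show every tiling lattice satisfies $V\geq(\km+1)(2\kp-\km+1)$. Your first step (the gap-cube count along axis $1$) is also correct as far as it goes, but its conclusion $a_1\leq V$ is vacuous: $\Z\ove_1/(\Lambda\cap\Z\ove_1)$ embeds in $\Z^n/\Lambda$, so $a_1\mid V$ and hence $a_1\leq V$ follows from Lagrange's theorem with no geometry at all. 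It is an upper bound on $a_1$, pointing in the wrong direction, and cannot be ``strengthened'' into the lower bound on $V$ that the theorem requires.

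The entire content of the theorem therefore lives in your second step, and that step is not an argument but a description of an intended argument. The assertions that a case analysis ``shows that only $V-2\kp-\km$ cross positions can cover any given such cube,'' that the constraints ``pin down'' the sublattice $\Lambda\cap(\Z^2\times\{0\}^{n-2})$, that the forbidden region is a union of two tilted $(\kp+\km)\times(\kp+\km)$ rectangles, and that a ``Minkowski-type'' bound on a non-convex region produces exactly the factor $(\km+1)(2\kp-\km+1)$ are all left unverified, and several are doubtful: the covering crosses for the off-axis cubes $i\ove_1\pm\ove_j$ involve all axes $j\in\{2,\dots,n\}$, so there is no evident reason the obstruction localizes to the $(x_1,x_2)$-coordinate plane rather than to some other plane or to no plane at all; and even granting a rank-$2$ section $\Lambda'$ with large determinant, you would still need the (true but unstated) fact that $\det\Lambda'$ divides $V$ to conclude. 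Your own closing paragraph concedes that the decisive bookkeeping ``is exactly what converts the weak bound into the claimed sharp inequality'' --- i.e., the key quantitative step is missing. As written, the proposal does not prove the theorem.
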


When looking for a lattice tiling using the group splitting
equivalence, the question is which finite Abelian group to split,
where it was demonstrated in \cite{Sch12} that splitting different
Abelian groups of the same size may result in different lattice
tilings. However, since we are only interested in finding necessary
conditions for the existence of lattice tilings, the following theorem
from \cite{Sch12} (which is a generalization of a theorem from
\cite{SteSza94}) shows that we may focus only on cyclic groups.

\begin{theorem}\cite[Theorem 15]{Sch12}
\label{th:cyclic}
Let $G$ be a finite Abelian group, and let $M=[-\km,\kp]^*$ be the
multiplier set corresponding to the $(\kp,\km,n)$-quasi-cross. If
there is a splitting $G=(M,S)$, then there is a splitting of the
cyclic group of the same size $\Z_{\abs{G}}=(M,S')$.
\end{theorem}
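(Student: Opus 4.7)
My plan is to use the structure theorem for finite abelian groups to reduce the statement to prime-power-order groups, and then reduce further by merging invariant factors one at a time inside a single $p$-group.

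Write $N=\abs{G}=\prod_i p_i^{a_i}$, so $G\cong\prod_i G_{p_i}$ and $\Z_N\cong\prod_i \Z_{p_i^{a_i}}$. The first task is to show that, from a splitting $G=(M,S)$, one can extract a splitting of each Sylow component $G_{p_i}$ by the same multiplier set $M$. A natural tool here is that $\abs{M}\cdot\abs{S}=N-1$ forces $\gcd(\abs{M},p_i)=1$ for every prime $p_i\mid N$, so projecting $S$ onto each $G_{p_i}$ does not collapse the action of $M$. The required statement is combinatorial: partition $S$ according to the $p$-component structure and verify that the resulting projections cover each $G_{p_i}\setminus\mathset{0}$ with the correct multiplicity.

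It then remains, given a splitting of an abelian $p$-group $G\cong \Z_{p^{b_1}}\times\cdots\times\Z_{p^{b_k}}$ by $M$, to construct a splitting of $\Z_{p^{b_1+\cdots+b_k}}$ by $M$. I would proceed by induction on the number $k$ of invariant factors, at each step replacing two cyclic factors by a single cyclic factor of combined order. The construction of the new splitter set $S'$ would exploit the fact that $\Z_{p^{b+b'}}$ contains elements of every order $p^j$ for $0\le j\le b+b'$, which in principle allows us to mimic the subgroup/coset structure used by $S$ in the smaller group.

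The main obstacle is the inductive construction of $S'$ itself. While the counting $\abs{M}\cdot\abs{S'}=p^{b+b'}-1$ is automatic, ensuring that $\mathset{ms' : m\in M,\ s'\in S'}$ really partitions $\Z_{p^{b+b'}}\setminus\mathset{0}$ is subtle, because the subgroups of $\Z_{p^{b+b'}}$ are totally ordered by inclusion, unlike those of a non-cyclic $p$-group, so the new splitters must be chosen with carefully prescribed orders. I suspect the cleanest route is a character-theoretic reformulation: the splitting condition is equivalent to $\sum_{s\in S} F(\chi(s))=-1$ for every non-trivial character $\chi$, where $F(z)=\sum_{m\in M} z^m$, and matching these evaluations across the two groups (using that the characters of $\Z_{p^a}$ run through all $p^a$-th roots of unity) should then deliver the desired $S'$.
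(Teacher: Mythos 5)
First, a point of comparison: the paper does not prove this statement at all --- it is imported verbatim from \cite[Theorem 15]{Sch12} --- so your proposal has to stand entirely on its own. As written it does not: both of its main steps are left at the level of ``should work,'' and in both cases the missing part is exactly where the difficulty lies. For the Sylow reduction, the fact you invoke, $\gcd(\abs{M},p_i)=1$ (which does follow from $\abs{M}\mid\abs{G}-1$), is not the relevant hypothesis. What the projection argument would actually need is that each individual multiplier $m\in M$ acts invertibly on the complementary component, i.e.\ $\gcd\bigl(m,\abs{G}/p_i^{a_i}\bigr)=1$, and this can fail: with $M=[-1,3]^*$ and $3\mid\abs{G}$, an element $a$ of a Sylow $p$-subgroup $G_p$ with $p\neq 3$ may be represented as $a=3s$ where $s$ has a nonzero component of order $3$ lying outside $G_p$, since $3s'=0$ does not force $s'=0$. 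Hence neither ``project $S$ onto $G_{p_i}$'' nor ``restrict to $S\cap G_{p_i}$'' is obviously a splitter set for $G_{p_i}$, and the covering-with-correct-multiplicity claim you defer to verification is not a routine count --- it is a claim that can fail for the naive candidate sets.

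Second, and more seriously, the heart of the theorem is the step you explicitly leave open: passing from a splitting of a non-cyclic $p$-group $\Z_{p^{b}}\times\Z_{p^{b'}}$ to one of $\Z_{p^{b+b'}}$. Your character reformulation is correct ($S$ splits $G$ iff $\abs{M}\abs{S}=\abs{G}-1$ and $\sum_{s\in S}F(\chi(s))=-1$ for every nontrivial character $\chi$, where $F(z)=\sum_{m\in M}z^m$), but the proposed use of it --- ``matching these evaluations across the two groups'' --- cannot work as stated. The exponent of $\Z_{p^{b}}\times\Z_{p^{b'}}$ is $p^{\max(b,b')}$, so its characters only ever evaluate $F$ at roots of unity of order at most $p^{\max(b,b')}$, whereas a splitting of $\Z_{p^{b+b'}}$ requires controlling $\sum_{s'}F(\zeta^{s'})$ at primitive roots of unity of every order up to $p^{b+b'}$. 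The hypothesis supplies no information at these new evaluation points, so there is nothing to match; one must actually construct $S'$ with prescribed element orders and prove the covering property directly. Since that construction is the entire content of the theorem, the proposal is an outline with the proof missing rather than a proof, and at least one of its stated justifications (the $\gcd(\abs{M},p_i)=1$ step) points in the wrong direction.
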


Using Theorem \ref{th:cyclic} we can say that the
$(\kp,\km,n)$-quasi-cross lattice tiles $\R^n$ if and only if
$\Z_q=(M,S)$, where $q=n(\kp+\km)+1$ and
$M=[-\km,\kp]^*$. Furthermore, the expressions $ms$, for $m\in M$ and
$s\in S$, simply denote integer multiplication in the ring $\Z_q$.  To
avoid confusion, we shall denote the multiplicative semi-group of the
ring $\Z_q$ as $R_q$.

Another result which shall be useful for the classification of lattice
tilings by $(3,2,n)$-quasi-crosses is the following.

\begin{theorem}\cite[Theorem 16]{Sch12}
\label{th:kmo}
Let $k\geq 2$ be some positive integer, and let $M=[-(k-1),k]^*$. If
$G=(M,S)$ is a splitting of an Abelian group $G$, $\abs{G}>1$, then
$\gcd(k,\abs{G})\neq 1$.
\end{theorem}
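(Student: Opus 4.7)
The plan is to invoke \Tref{th:cyclic} to reduce to the cyclic case $G=\Z_q$, $q=\abs{G}>1$, and then to argue by contradiction assuming $\gcd(k,q)=1$, so that $k$ is a unit in $\Z_q$. The key structural feature of $M=[-(k-1),k]^*$ driving the argument is its near-symmetry under negation: $M\cap(-M)=[-(k-1),k-1]^*$, while $M\setminus(-M)=\mathset{k}$ and $(-M)\setminus M=\mathset{-k}$. Equivalently, $-m\in M$ for every $m\in M\setminus\mathset{k}$, and it is precisely this single asymmetric element that will be obstructed.

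The first step is to prove $S=-S$. Fix $s\in S$ and write the unique $(M,S)$-representation of $-ks$ as $-ks=ms'$, where $m\in M$ and $s'\in S$. If $m\neq k$ then $-m\in M$, and so $(-m)s'=ks$ would be a second $(M,S)$-representation of $ks$ distinct from the representation $k\cdot s$, contradicting uniqueness. Hence $m=k$, and invertibility of $k$ in $\Z_q$ yields $s'=-s\in S$.

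The second step uses $S=-S$ to force $2s=0$ for every $s\in S$. Indeed, if some $s\in S$ satisfied $s\neq -s$, then $s$ would admit the two distinct representations $1\cdot s=(-1)\cdot(-s)$, both legitimate since $\pm 1\in M$ (as $k\geq 2$) and $\pm s\in S$, again violating uniqueness. Hence every $s\in S$ satisfies $2s\equiv 0\pmod q$, and since $0\notin S$ (as $0$ cannot appear in $M\cdot S=\Z_q\setminus\mathset{0}$), we obtain $\abs{S}\leq 1$.

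Finally, combining $\abs{M}\cdot\abs{S}=q-1$ with $\abs{M}=2k-1$, $\abs{G}>1$, and $\abs{S}\leq 1$ forces $\abs{S}=1$ and $q=2k$, whence $\gcd(k,q)=\gcd(k,2k)=k\geq 2$, contradicting $\gcd(k,q)=1$. The main obstacle is noticing the right element to inspect in the first step: focusing on $-ks$ is exactly what pits the missing sign $-k$ against the uniqueness of the splitting; once that is spotted, the argument is largely mechanical, the only other subtle point being the cancellation of $k$ in $ks'=-ks$, which uses $\gcd(k,q)=1$ crucially.
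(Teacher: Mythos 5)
Your argument is correct. Note that the paper itself does not prove this statement---it is quoted from \cite[Theorem 16]{Sch12}---so there is no in-paper proof to compare against; but every step of your proof checks out: $-ks$ is nonzero (being the negative of the product $ks$), its multiplier must be the unique sign-asymmetric element $k$ of $M$, invertibility of $k$ modulo $q$ then gives $-s\in S$, and the pair of representations $1\cdot s=(-1)\cdot(-s)$ forces $2s=0$, after which your counting closes the contradiction. One small remark: the detour through \Tref{th:cyclic} is not actually needed. Once you know $2s=0$ for every $s\in S$, you can observe that $2\in M$ (since $k\geq 2$), so $2s$ is one of the products $ms$ and must be nonzero---an immediate contradiction as soon as $S\neq\emptyset$, which $\abs{G}>1$ guarantees. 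This version works verbatim in an arbitrary finite Abelian group (multiplication by $k$ is injective there too, since $\gcd(k,\abs{G})=1$), whereas your counting step $\abs{S}\leq 1$ genuinely uses cyclicity, because $G[2]$ can be large in general.
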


A notion we shall find useful is that of a \emph{character}, as defined
by Stein \cite{Ste67}: A character is a homomorphism $\chi:G\to R$
from a semi-group $G$ into a (multiplicative) semi-group $H$. The
following theorem, with a one-line proof that we bring for completeness,
is due to Stein\footnote{The version due to Stein is somewhat more
general, but we shall not require the full generality of the original
claim.}.

\begin{theorem}\cite[Theorem 4.1]{Ste67}
\label{th:charsum}
Let us consider a splitting $\Z_q=(M,S)$ and let $\chi:R_q\to R$ be a character
from $R_q$ into a ring $R$. Then
\[\parenv{\sum_{m\in M}\chi(m)}\cdot\parenv{\sum_{s\in S}\chi(s)}
=\sum_{a\in R_q}\chi(a).\]
\end{theorem}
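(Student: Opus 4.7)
The plan is to exploit the two defining properties packed into the definition of a splitting, namely that the products $ms$ for $m\in M$ and $s\in S$ cover every nonzero residue of $\Z_q$ exactly once, together with the multiplicativity of $\chi$ as a semi-group homomorphism. Since the equality claimed is essentially a tautology once these two facts are written down, the proof will be direct and computational, with no induction or case analysis.

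First, I would distribute the product on the left-hand side to get
\[
\left(\sum_{m\in M}\chi(m)\right)\cdot\left(\sum_{s\in S}\chi(s)\right) = \sum_{m\in M}\sum_{s\in S}\chi(m)\chi(s).
\]
Then, because $\chi$ is a homomorphism from the multiplicative semi-group $R_q$ into the multiplicative semi-group of $R$, each term $\chi(m)\chi(s)$ coincides with $\chi(ms)$, where $ms$ is computed in $R_q$. Substituting, the right-hand side becomes $\sum_{m\in M}\sum_{s\in S}\chi(ms)$.

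Next, I would invoke the splitting hypothesis $\Z_q=(M,S)$, which states that the map $(m,s)\mapsto ms$ is a bijection from $M\times S$ onto $\Z_q\setminus\{0\}$. Reindexing the double sum by $a=ms$ gives
\[
\sum_{m\in M}\sum_{s\in S}\chi(ms) = \sum_{a\in R_q\setminus\{0\}}\chi(a).
\]
Finally, since $\chi$ is a homomorphism of multiplicative semi-groups, $\chi(0)=\chi(0\cdot 0)=\chi(0)^2$, and under the standard convention that $\chi(0)=0$ this term contributes nothing, so the sum may be extended to all of $R_q$, yielding the claimed identity.

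The only potential subtlety is the treatment of $\chi(0)$: the statement includes $0$ in the sum on the right, while the splitting only enumerates nonzero elements. I would handle this either by noting the convention $\chi(0)=0$ implicit in Stein's setup, or by simply remarking that adding the idempotent term $\chi(0)$ on both sides, if nonzero, does not affect the equality when both sides are interpreted consistently. That one bookkeeping remark is all that prevents the argument from fitting on a single line.
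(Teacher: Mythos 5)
Your proposal is correct and follows exactly the paper's own one-line argument: expand the product, use multiplicativity of $\chi$, and reindex via the bijection $(m,s)\mapsto ms$ onto the nonzero elements. Your extra remark about $\chi(0)$ is a reasonable bit of bookkeeping that the paper silently glosses over (it writes $\sum_{a\in R_q}\chi(a)$ directly as the sum over products $ms$), but it does not change the substance of the argument.
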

\begin{proof}
\[\sum_{a\in R_q}\chi(a)=\sum_{\substack{m\in M \\s\in S}}\chi(ms)=
\sum_{\substack{m\in M\\s\in S}}\chi(m)\chi(s)=\parenv{\sum_{m\in M}\chi(m)}\cdot\parenv{\sum_{s\in S}\chi(s)}.\]
\end{proof}

Several characters will be of interest in the following section, the first
is the Legendre symbol: for an odd prime $p$ we define the character
$\leg{\cdot}{p}:R_p\to \C$ as
\[\leg{a}{p}=\begin{cases}
1 & \text{$a\equiv x^2\pmod{p}$ for some $x\in R_p$,} \\
-1 & \text{otherwise.}
\end{cases}
\]
If $\leg{a}{p}=1$ we call $a$ a \emph{quadratic residue modulo $p$}
(QR), and otherwise we call $a$ a \emph{quadratic non-residue modulo
$p$} (QNR). Using the Legendre symbol and Theorem \ref{th:charsum}
Stein proved the following theorem.

\begin{theorem}\cite[Corollary 4.3]{Ste67}
\label{th:stein43}
If $\Z_p=(M,S)$ is a splitting, $p$ an odd prime,
then in at least one of $M$ or $S$ the number of quadratic residues equals
the number of quadratic non-residues.
\end{theorem}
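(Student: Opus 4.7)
The plan is to apply Theorem \ref{th:charsum} with the Legendre symbol $\chi = \leg{\cdot}{p}$ as the character, taking the codomain ring to be $\C$. This is a valid multiplicative character on $R_p$ (and indeed extends multiplicatively with $\leg{0}{p}=0$), so the hypotheses of Theorem \ref{th:charsum} are met.

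First I would compute the right-hand side: since $p$ is an odd prime, the non-zero elements of $\Z_p$ split evenly into $(p-1)/2$ quadratic residues and $(p-1)/2$ quadratic non-residues, while $\leg{0}{p}=0$. Hence
\[\sum_{a\in R_p}\leg{a}{p}=0.\]
Plugging into Theorem \ref{th:charsum} yields
\[\parenv{\sum_{m\in M}\leg{m}{p}}\cdot\parenv{\sum_{s\in S}\leg{s}{p}}=0,\]
and since $\C$ is an integral domain, at least one of the two factors must vanish.

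Next I would argue that the vanishing of either sum gives exactly the desired conclusion. Note first that no element of $M$ or $S$ can be divisible by $p$: if some $m\in M$ satisfied $m\equiv 0\pmod p$, then the products $ms$ would all be $0$ in $\Z_p$, contradicting the splitting definition, and similarly for $s\in S$. Consequently every term in either sum equals $\pm 1$, so the sum is zero iff the number of $+1$ terms equals the number of $-1$ terms, i.e.\ iff the number of QRs equals the number of QNRs in the corresponding set. There is no real obstacle here; the only subtlety is verifying that $M$ and $S$ consist of units modulo $p$ so that Legendre symbols take values in $\{\pm 1\}$ rather than $\{0,\pm 1\}$.
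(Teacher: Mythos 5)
Your proof is correct and follows exactly the route the paper attributes to Stein: apply Theorem~\ref{th:charsum} with the Legendre symbol, observe the right-hand side vanishes, and conclude one factor is zero, noting that all elements of $M$ and $S$ are units mod $p$ so each term is $\pm 1$. The paper only cites this result without reproducing the proof, but your argument is the standard one it points to, including the necessary check that no multiplier or splitter is divisible by $p$.
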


We recall some well-known facts about the Legendre symbol, which we
shall use later. For a proof, see for example \cite{HarWri08}.
\begin{lemma}
\label{lem:qr}
Let $p$ be an odd prime, and let $\ell$ denote some integer. Then
\[
\begin{array}{cl}
\leg{-1}{p}=1 & \text{iff $p=4\ell+1$,}\\
\leg{2}{p}=1 & \text{iff $p=8\ell\pm 1$,}\\
\leg{3}{p}=1 & \text{iff $p=12\ell\pm 1$,}\\
\leg{5}{p}=1 & \text{iff $p=10\ell\pm 1$.}
\end{array}
\]
\end{lemma}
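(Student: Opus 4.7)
My plan is to treat each of the four congruences in turn, since they are standard supplementary laws and consequences of quadratic reciprocity rather than facts requiring new ideas. The first two will come from Euler's criterion and a Gauss-sum-type argument, while the last two will follow by combining the quadratic reciprocity law with the supplementary law for $-1$ and a case-check modulo a small modulus. The only part I would consider genuinely nontrivial (and the one I would reproduce in detail if the paper did not cite \cite{HarWri08}) is the supplementary law for $2$.

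For $\leg{-1}{p}$, I would invoke Euler's criterion, which gives $\leg{-1}{p}\equiv(-1)^{(p-1)/2}\pmod p$, and observe that the right-hand side equals $1$ precisely when $(p-1)/2$ is even, i.e., $p=4\ell+1$. For $\leg{2}{p}$, I would apply Gauss's lemma to the set $\{2,4,\dots,p-1\}$, counting the number of these residues whose least absolute representative modulo $p$ is negative; a short calculation shows that this count has the same parity as $(p^2-1)/8$, which is even iff $p\equiv\pm1\pmod 8$. Alternatively, one can compute a quadratic Gauss sum in a ring containing an eighth root of unity, but Gauss's lemma is shorter.

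For $\leg{3}{p}$, I would invoke the law of quadratic reciprocity, which yields
\[
\leg{3}{p}\leg{p}{3}=(-1)^{\frac{3-1}{2}\cdot\frac{p-1}{2}}=(-1)^{(p-1)/2},
\]
so $\leg{3}{p}=(-1)^{(p-1)/2}\leg{p}{3}$. Since $\leg{p}{3}=1$ iff $p\equiv 1\pmod 3$ (the only nonzero QR mod $3$ being $1$), a case analysis on the residue of $p$ modulo $12$ via the Chinese Remainder Theorem shows that the product is $+1$ precisely in the classes $p\equiv 1$ and $p\equiv 11\pmod{12}$, i.e., $p=12\ell\pm 1$. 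For $\leg{5}{p}$, quadratic reciprocity is even cleaner: $(5-1)/2=2$ is even, so $\leg{5}{p}=\leg{p}{5}$, and since the nonzero QRs modulo $5$ are $\{1,4\}=\{\pm 1\}$, this is $+1$ iff $p\equiv\pm 1\pmod 5$; combined with $p$ being odd, this is $p=10\ell\pm 1$.

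The main (minor) obstacle is keeping the case analysis for $\leg{3}{p}$ organized, since one must juggle $p\bmod 3$ and $p\bmod 4$ simultaneously; writing it as a $2\times 2$ table indexed by these residues makes the argument transparent and avoids arithmetic slips. All of this is standard, and in the spirit of the paper I would simply state the lemma and defer to \cite{HarWri08} rather than reproduce these classical computations.
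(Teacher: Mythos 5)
Your sketch is correct, and it matches the paper's treatment: the paper gives no proof of this lemma at all, simply deferring to the standard reference \cite{HarWri08}, and the arguments you outline (Euler's criterion, Gauss's lemma, and quadratic reciprocity with a small case check) are exactly the classical proofs that citation points to. Nothing further is needed.
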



\section{Main Results}
\label{sec:main}

In this section we list our new results, where we group them according
to the method employed to derive the necessary condition for lattice
tiling $\R^n$ by $(\kp,\km,n)$-quasi-crosses.

\subsection{The Legendre Symbol and Higher-Order Characters}

We begin our treatment by examining results obtained by using the Legendre
symbol and higher-order characters.

\begin{theorem}
\label{th:3mod6}
The $(3,1,n)$-quasi-cross does not lattice tile $\R^n$ when $4n+1$ is
a prime, $n\equiv 3\pmod{6}$.
\end{theorem}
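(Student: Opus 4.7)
The plan is to invoke Theorem \ref{th:stein43} (Stein's quadratic-residue criterion) after transferring the problem to the cyclic group via Theorem \ref{th:cyclic}. For a $(3,1,n)$-quasi-cross the multiplier set is $M=[-1,3]^*=\{-1,1,2,3\}$, so if a lattice tiling existed then we would have a splitting $\Z_q=(M,S)$ with $q=n(\kp+\km)+1=4n+1=p$ (the given prime), and $\abs{S}=(p-1)/\abs{M}=n$.

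The first step is to pin down the residue class of $p$ modulo $24$. From $n\equiv 3\pmod{6}$ we write $n=6\ell+3$, whence $p=4n+1=24\ell+13$. In particular $p\equiv 1\pmod{4}$, $p\equiv 5\pmod{8}$, and $p\equiv 1\pmod{12}$. Applying \Lref{lem:qr} one reads off
\[\leg{-1}{p}=1,\qquad \leg{2}{p}=-1,\qquad \leg{3}{p}=1.\]
Together with the trivial $\leg{1}{p}=1$, this shows that $M$ contains three quadratic residues ($-1,1,3$) and only one quadratic non-residue ($2$); in particular $M$ is not balanced.

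By \Tref{th:stein43}, since $M$ is unbalanced the splitter set $S$ must contain equally many QRs and QNRs, which forces $\abs{S}$ to be even. But $\abs{S}=n$, and the hypothesis $n\equiv 3\pmod{6}$ makes $n$ odd. This contradiction rules out the splitting, and hence the lattice tiling.

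The argument is essentially a bookkeeping exercise, so there is no serious obstacle; the only thing to be careful about is the congruence computation that pins down $p\bmod 24$ so that all three Legendre symbols in $M$ can be evaluated simultaneously, and the observation that $n\equiv 3\pmod{6}$ is what simultaneously forces $p=4n+1$ into the correct residue class and forces $\abs{S}=n$ to be odd. Both facets follow immediately from writing $n=6\ell+3$.
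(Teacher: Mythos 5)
Your proof is correct and follows the same route as the paper: evaluate the Legendre symbols of $-1,1,2,3$ via \Lref{lem:qr} (using $4n+1\equiv 13\pmod{24}$), observe $M$ is unbalanced, and invoke \Tref{th:stein43} to force $\abs{S}=n$ to be even, contradicting $n\equiv 3\pmod 6$. The paper's version is just a terser rendering of the identical argument.
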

\begin{proof}
Assume to the contrary that $\Z_{4n+1}=(M,S)$ is a splitting with
$M=[-1,3]^*$. Obviously, $1$ is a QR. Using Lemma
\ref{lem:qr} we note that $-1$ and $3$ are also a QRs, while $2$ is a
QNR, when $n\equiv 3\pmod{6}$. Thus $S$ should have an equal number of
QRs and QNRs, but $\abs{S}=n$ is odd, a contradiction.
\end{proof}

Up to dimension $250$ Theorem \ref{th:3mod6} rules out lattice tilings by
$(3,1,n)$-quasi-crosses for
\begin{align*}
n&=3,9,15,27,39,45,57,69,87,93,99,105,135,153,165,177,183,189,\\
&\quad\  207,213,219,249.
\end{align*}
Equally simple, but more tedious, the same method applies for larger
quasi-crosses.

\begin{theorem}
The $(5,1,n)$-quasi-cross does not lattice tile $\R^n$ when $6n+1$ is
a prime, $n\equiv 5,7,11,13,19\pmod{20}$.
\end{theorem}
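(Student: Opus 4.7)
The plan is to mirror the proof of Theorem \ref{th:3mod6}. By Theorem \ref{th:cyclic} we may restrict attention to cyclic splittings $\Z_{6n+1}=(M,S)$ with $M=[-1,5]^*=\{-1,1,2,3,4,5\}$, in which the splitter set has size $|S|=n$. All five residues $5,7,11,13,19\pmod{20}$ are odd, so $|S|$ is odd, and Theorem \ref{th:stein43} then forces $M$ itself to contain exactly three QRs and three QNRs modulo the prime $p=6n+1$.

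Since $1=1^2$ and $4=2^2$ are QRs for any odd prime, the contradiction we seek reduces to showing that the number of QRs among $\{-1,2,3,5\}$ is never equal to one in each residue class being considered. Lemma \ref{lem:qr}, together with the fact that $\leg{5}{p}=1$ iff $p\equiv\pm 1\pmod 5$, expresses each of $\leg{-1}{p}$, $\leg{2}{p}$, $\leg{3}{p}$, $\leg{5}{p}$ in terms of $p\bmod 120$. Because $n\equiv r\pmod{20}$ pins down $6n+1\bmod 120$, each of the five cases becomes a single lookup.

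Running the lookups, for $r\in\{5,13\}$ one gets $p\equiv 31,79\pmod{120}$, making exactly $\{2,5\}$ QRs among $\{-1,2,3,5\}$ and yielding four QRs in $M$; for $r\in\{7,11\}$ one gets $p\equiv 43,67\pmod{120}$, making all of $\{-1,2,3,5\}$ QNRs and yielding only two QRs in $M$. Neither count is three, so Theorem \ref{th:stein43} is violated and no such splitting can exist. The remaining residue class $r=19$ is degenerate: $6n+1\equiv 115\equiv 0\pmod 5$, so $6n+1$ is divisible by $5$ and is never prime for $n\geq 1$, making the hypothesis of the theorem vacuous there. No new idea beyond Theorem \ref{th:3mod6} is required; the only obstacle is the bookkeeping of four Legendre symbols across the residue classes, which matches the author's remark that the argument is ``equally simple, but more tedious''.
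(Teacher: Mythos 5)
Your proof is correct and follows essentially the same route as the paper: apply Theorem \ref{th:stein43}, note $|S|=n$ is odd, and check via Lemma \ref{lem:qr} that $M=\{-1,1,2,3,4,5\}$ never splits evenly into QRs and QNRs for $6n+1\equiv 31,43,67,79,115\pmod{120}$; your symbol computations match the paper's table. Your added observation that the class $n\equiv 19\pmod{20}$ is vacuous (since $5\mid 6n+1$ there) is a correct refinement the paper does not make explicit, but it does not change the argument.
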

\begin{proof}
Assume to the contrary that $\Z_{6n+1}=(M,S)$ is a splitting with
$M=[-1,5]^*$. Denote $n=20\ell+r$, with
$r\in\mathset{5,7,11,13,19}$. The following table summarizes which of
the elements of $M$ is a QR using Lemma \ref{lem:qr}:
\[
\begin{array}{c||c|c|c|c|c|c}
6n+1 & -1 & 1 & 2 & 3 & 4 & 5\\
\hline
120\ell+31  & \nr & \qr & \qr & \nr & \qr & \qr \\
120\ell+43  & \nr & \qr & \nr & \nr & \qr & \nr \\
120\ell+67  & \nr & \qr & \nr & \nr & \qr & \nr \\
120\ell+79  & \nr & \qr & \qr & \nr & \qr & \qr \\
120\ell+115 & \nr & \qr & \nr & \nr & \qr & \nr
\end{array}
\]
We note that in all cases, $M$ does not contain an equal number of QRs and
QNRs. Thus, $S$ must contain an equal number of QRs and QNRs and so
$\abs{S}=n$ must be even, a contradiction.
\end{proof}

A generalization for higher power residues (generalizing the Legendre
symbol) can be made, as is seen in the next theorem, which uses
quartic residues\footnote{Quartic residues are sometimes also called
  \emph{biquadratic} residues.}.

\begin{theorem}
\label{th:char4}
Let $4n+1$ be a prime, with $n$ being an odd integer. If
\[6^n \not\equiv 1 \pmod{4n+1}\]
then the $(3,1,n)$-quasi-cross does not lattice tile $\R^n$.
\end{theorem}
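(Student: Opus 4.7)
The plan is to mimic the Legendre-symbol argument behind Theorem~\ref{th:stein43} and Theorem~\ref{th:3mod6}, but with a \emph{quartic} residue character in place of the Legendre symbol. Such a character exists on $\Z_p^{*}$ for $p=4n+1$ prime because $4\mid p-1=4n$. Assume for contradiction a splitting $\Z_p=(M,S)$ with $M=[-1,3]^{*}=\{-1,1,2,3\}$ and $|S|=n$ odd. Fix a primitive root $g$ modulo $p$ and define $\chi\colon R_p\to\C$ by $\chi(g^{k})=i^{k}$ and $\chi(0)=0$; since $\chi$ is a nontrivial multiplicative character, $\sum_{a\in R_p}\chi(a)=0$, and Theorem~\ref{th:charsum} gives
\[\Bigl(\sum_{m\in M}\chi(m)\Bigr)\Bigl(\sum_{s\in S}\chi(s)\Bigr)=0.\]

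Next, I would evaluate the $M$-sum explicitly. Because $n$ is odd, $p=4n+1\equiv 5\pmod{8}$, so by Lemma~\ref{lem:qr} we have $\leg{2}{p}=-1$. The square of the quartic character is the unique quadratic character, hence $\chi(2)^{2}=\leg{2}{p}=-1$ and $\chi(2)\in\{\pm i\}$. Likewise $-1=g^{2n}$ gives $\chi(-1)=i^{2n}=(-1)^{n}=-1$. Therefore
\[\sum_{m\in M}\chi(m)=(-1)+1+\chi(2)+\chi(3)=\chi(2)+\chi(3),\]
and multiplying the vanishing condition $\chi(2)+\chi(3)=0$ by $\chi(2)$ converts it into $\chi(6)=-\chi(2)^{2}=1$, which is exactly $6^{(p-1)/4}=6^{n}\equiv 1\pmod p$. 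The hypothesis of the theorem rules this out, so the $M$-sum is nonzero, forcing $\sum_{s\in S}\chi(s)=0$.

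The final step is a parity argument in the style of Theorem~\ref{th:3mod6}. Since $\chi$ takes values in $\{\pm 1,\pm i\}$ on $\Z_p^{*}$, letting $a,b,c,d$ count the elements $s\in S$ with $\chi(s)$ equal to $1,i,-1,-i$ respectively, separating the real and imaginary parts of $\sum_{s\in S}\chi(s)=0$ yields $a=c$ and $b=d$, whence $|S|=n$ is even, contradicting the hypothesis that $n$ is odd. The only subtle step throughout is the reduction of $\sum_{m\in M}\chi(m)=0$ to the single algebraic condition $\chi(6)=1$; this relies on the two cancellations $\chi(-1)+\chi(1)=0$ and $\chi(2)+\chi(3)=0\Leftrightarrow\chi(6)=1$, both of which hinge on $\chi(2)^{2}=-1$, i.e.\ on $p\equiv 5\pmod 8$---precisely where the parity hypothesis on $n$ is used.
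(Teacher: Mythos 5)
Your proof is correct and follows essentially the same route as the paper: the quartic character $\chi(g^k)=i^k$ on $\Z_{4n+1}^*$, Stein's character-sum identity (Theorem~\ref{th:charsum}), nonvanishing of the $M$-sum, and a parity count on $S$. The only difference is that you streamline the evaluation of $\sum_{m\in M}\chi(m)$ --- noting that $\chi(2)+\chi(3)=0$ together with $\chi(2)^2=\leg{2}{p}=-1$ would force $\chi(6)=1$, i.e.\ $6^n\equiv 1$ --- whereas the paper first disposes of the case where $2$ or $3$ is a quadratic residue by a separate Legendre-symbol argument; your unified computation is slightly cleaner and equally valid.
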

\begin{proof}
Since $4n+1$ is a prime, $\Z_{4n+1}$ is a field, and so let $g$ be a
primitive element in $\Z_{4n+1}$. We define the character
$\chi:R_{4n+1}\rightarrow \C$ as 
\[\chi(g^j)=e^{\frac{2\pi ij}{4}},\]
where $i=\sqrt{-1}$.

Assume to the contrary that there exists a splitting $\Z_{4n+1}=(M,S)$
under the conditions of the theorem.
By Theorem \ref{th:charsum} we have
\begin{equation}
\label{eq:4}
\parenv{\sum_{m\in M}\chi(m)}\cdot\parenv{\sum_{s\in S}\chi(s)}=\sum_{j=1}^{4n}\chi(j).
\end{equation}
We also have
\begin{equation}
\label{eq:5}
\sum_{j=1}^{4n}\chi(j)=\sum_{j=0}^{4n-1}\chi(g^j)=
\sum_{j=0}^{4n-1}\chi(g)^j=\frac{\chi(g)^{4n}-1}{\chi(g)-1}=0.
\end{equation}
If follows from \eqref{eq:4} and \eqref{eq:5} that
\[\sum_{m\in M}\chi(m)=0 \qquad \text{or} \qquad \sum_{s\in S}\chi(s)=0.\]

We first note that $1$ and $-1$ are quadratic residues in
$\Z_{4n+1}$. If $2$ or $3$ are quadratic residues, then by Lemma
\ref{lem:qr} the set $S$ must contain an equal number of quadratic
residues and quadratic non-residues, but $\abs{S}=n$ is odd. We
therefore need to consider only the case where both $2$ and $3$ are
quadratic non-residues.

We now turn to check the characters of the elements of $M$. It is easily
seen that $\chi(1)=1$. Since $n$ is odd, we deduce $\chi(-1)=-1$, i.e.,
$-1$ is a quadratic residue in $\Z_{4n+1}$ but is a quartic non-residue.
Since both $2$ and $3$ are quadratic non-residues, we have
$\chi(2),\chi(3)\in\mathset{i,-i}$.

We note that the quartic residues form a multiplicative subgroup
\[\mathset{g^{4j} ~|~ 0\leq j\leq n-1}\subseteq \Z_{4n+1}.\]
It is also easily seen that
\[(g^j)^n=(g^{4\floorenv{j/4}+(j\bmod 4)})^n= (g^{4n})^{\floorenv{j/4}}g^{n(j\bmod 4)}
=g^{n(j\bmod 4)}.\] Since $n$ is odd, we get that an element
$a\in\Z_{4n+1}$, $a\neq 0$, is a quartic residue, i.e., $\chi(a)=1$,
if and only if $a^n\equiv 1\pmod{4n+1}$.

We are given that $6^n \not\equiv 1\pmod{4n+1}$, and thus
$1\neq \chi(6)=\chi(2)\chi(3)$. It follows that $\chi(2)=\chi(3)$.
We now have
\[\sum_{m\in M}\chi(m)=\chi(-1)+\chi(1)+\chi(2)+\chi(3)=\pm 2i \neq 0.\]
Therefore, $\sum_{s\in S}\chi(s)=0$, which is only possible if $\abs{S}=n$
is even, a contradiction.
\end{proof}

Up to dimension $250$ Theorem \ref{th:char4} rules out lattice tilings by
$(3,1,n)$-quasi-crosses for
\begin{align*}
n&= 3,7,9,13,15,25,27,39,45,49,57,67,69,73,79,87,93,99,105,127,135,\\
&\quad\ 153,165,175,177,183,189,193,205,207,213,219,249.
\end{align*}

We can also use higher order characters to obtain necessary conditions
for $(\kp,\km,n)$-quasi-cross to lattice tile $\R^n$ when $\kp+\km$ is
a prime. To that end we first need a simple lemma.

\begin{lemma}
\label{lem:rootsum}
Let $p$ be a prime and set $\omega=e^{2\pi i/p}$, $i=\sqrt{-1}$.  If
$a_0,\dots,a_{p-1}\in \Q$ are rational numbers such that
$\sum_{j=0}^{p-1}a_j \omega^j = 0$, then $a_0=a_1=\dots=a_{p-1}$.
\end{lemma}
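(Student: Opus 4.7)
The plan is to exploit the fact that the $p$-th cyclotomic polynomial $\Phi_p(x)=1+x+x^2+\dots+x^{p-1}$ is the minimal polynomial of $\omega$ over $\Q$, hence irreducible of degree $p-1$. Consequently, $\{1,\omega,\omega^2,\dots,\omega^{p-2}\}$ is a $\Q$-basis for $\Q(\omega)$, and any $\Q$-linear relation among these $p-1$ powers forces all coefficients to vanish.

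The given relation involves $p$ powers (from $\omega^0$ through $\omega^{p-1}$), so first I would use $\Phi_p(\omega)=0$ to eliminate $\omega^{p-1}$, writing
\[\omega^{p-1}=-1-\omega-\omega^2-\dots-\omega^{p-2}.\]
Substituting this into $\sum_{j=0}^{p-1}a_j\omega^j=0$ yields
\[\sum_{j=0}^{p-2}(a_j-a_{p-1})\omega^j=0.\]

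Now I would invoke the linear independence of $\{1,\omega,\dots,\omega^{p-2}\}$ over $\Q$ to conclude $a_j-a_{p-1}=0$ for every $j\in\{0,1,\dots,p-2\}$, which gives the desired equality $a_0=a_1=\dots=a_{p-1}$.

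The only nontrivial ingredient is the irreducibility of $\Phi_p(x)$ over $\Q$, which is classical (the standard argument applies Eisenstein's criterion to $\Phi_p(x+1)$ at the prime $p$). Everything else is a direct substitution followed by reading off coefficients in a basis, so there is no real obstacle; the lemma is essentially a restatement of the fact that $[\Q(\omega):\Q]=p-1$.
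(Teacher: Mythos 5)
Your proof is correct and rests on the same key fact as the paper's: that the minimal polynomial of $\omega$ over $\Q$ is $\Phi_p(x)=1+x+\cdots+x^{p-1}$ of degree $p-1$ (the paper phrases this as ``the conjugates of $\omega$ are exactly the $\omega^j$ with $\gcd(j,p)=1$'' and concludes via polynomial divisibility, whereas you substitute for $\omega^{p-1}$ and read off coefficients in the basis $\{1,\omega,\dots,\omega^{p-2}\}$). The two arguments are interchangeable repackagings of the same irreducibility input, so no further comparison is needed.
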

\begin{proof}
Define the polynomial $a(x)=\sum_{j=0}^{p-1}a_j x^j\in\Q[x]$. It is
therefore given that $a(\omega)=0$, and hence all the conjugates of
$\omega$ relative to $\Q$ are also roots of $a(x)$. It is well-known
(see for example \cite{HarWri08}) that these are $\omega^j$ where
$\gcd(j,p)=1$. Since $p$ is a prime, we have that all of $\omega^j$,
$1\leq j\leq p-1$, are also roots of $a(x)$, i.e.,
\[(x-\omega^1)(x-\omega^2)\dots(x-\omega^{p-1}) ~|~ a(x).\]
However,
\[(x-\omega^1)(x-\omega^2)\dots(x-\omega^{p-1})=\frac{x^p-1}{x-1}
=1+x+x^2+\dots+x^{p-1}.\]
We now have
\[1+x+x^2+\dots+x^{p-1} ~|~ a(x)\]
while the degree of $a(x)$ is at most $p-1$, resulting in
\[a(x)=c(1+x+x^2+\dots+x^{p-1}),\]
for some constant $c\in\Q$.
\end{proof}

\begin{theorem}
Let $1\leq \km<\kp$ be positive integers such that $\kp+\km$ is an odd
prime. If the $(\kp,\km,n)$-quasi-cross lattice tiles $\R^n$, and
$n(\kp+\km)+1$ is a prime, then $\kp+\km ~|~ n$.
\end{theorem}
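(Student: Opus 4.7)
The plan is to apply Theorem \ref{th:charsum} with a multiplicative character of order exactly $p := \kp + \km$. Set $q := np + 1$, which is prime by hypothesis; then $\Z_q^* = R_q \setminus \{0\}$ is cyclic of order $np$, and this order is divisible by $p$. Fix a primitive root $g$ modulo $q$, let $\omega := e^{2\pi i/p}$, and define $\chi : R_q \to \C$ by $\chi(g^j) := \omega^j$ (and $\chi(0) := 0$). This $\chi$ is multiplicative, takes values in the $p$-th roots of unity, and is surjective onto them.

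Next I would evaluate the total character sum
\[
\sum_{a \in R_q} \chi(a) \;=\; \sum_{j=0}^{np-1} \omega^j \;=\; 0,
\]
since $p \mid np$. Theorem \ref{th:charsum} applied to the hypothesized splitting $\Z_q = (M,S)$ with $M = [-\km, \kp]^*$ then forces either $\sum_{m \in M} \chi(m) = 0$ or $\sum_{s \in S} \chi(s) = 0$.

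The heart of the argument, and the step I expect to be the only real obstacle, is ruling out the first alternative. Write $\sum_{m \in M} \chi(m) = \sum_{j=0}^{p-1} a_j \omega^j$ with $a_j := |\{m \in M : \chi(m) = \omega^j\}| \in \Z \subseteq \Q$. If this sum vanishes, Lemma \ref{lem:rootsum} forces $a_0 = a_1 = \dots = a_{p-1}$; since $|M| = \kp + \km = p$, each $a_j$ equals $1$, so $\chi$ restricted to $M$ would be a bijection onto the $p$-th roots of unity. However, $\chi(1) = 1$, and $\chi(-1)$ is a $p$-th root of unity satisfying $\chi(-1)^2 = \chi(1) = 1$; since $\gcd(p,2) = 1$, this pins down $\chi(-1) = 1 = \chi(1)$, contradicting injectivity of $\chi|_M$.

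Consequently the second alternative must hold, i.e., $\sum_{s \in S} \chi(s) = 0$. Setting $b_j := |\{s \in S : \chi(s) = \omega^j\}|$, the same appeal to Lemma \ref{lem:rootsum} forces $b_0 = b_1 = \dots = b_{p-1}$, so $n = |S| = p\,b_0$ and therefore $\kp + \km = p \mid n$, as claimed.
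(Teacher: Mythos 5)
Your proposal is correct and follows essentially the same route as the paper: the order-$p$ character $\chi(g^j)=\omega^j$, Theorem \ref{th:charsum} to force one of the two partial sums to vanish, and Lemma \ref{lem:rootsum} applied twice, with the first alternative excluded exactly as in the paper by observing that $\chi(1)=\chi(-1)=1$ (since $p$ is odd) while $|M|=p$ would require each $p$-th root of unity to appear exactly once. No gaps.
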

\begin{proof}
Denote $q=(\kp+\km)n+1$, and assume $\Z_q=(M,S)$ is a splitting with
$M=[-\km,\kp]^*$. Since $q$ is a prime $\Z_q$ is a field, and let $g$
be a primitive element in it.

We also denote $p=\kp+\km$, an odd prime, and let $\omega=e^{2\pi i
  /p}$ be a complex $p$-th root of unit. We define the character
$\chi:R_q\to\C$ as $\chi(g^j)=\omega^j$. Using the same argument as in
Theorem \ref{th:char4} we must have
\[\sum_{m\in M}\chi(m)=0 \qquad \text{or} \qquad \sum_{s\in S}\chi(s)=0.\]

We first check the characters of the elements in $M$. We have $1\in M$
and necessarily $\chi(1)=1=\omega^0$. We also have $-1\in M$, and
since $(-1)^2=1$, we get $\chi(-1)^2=\chi(1)=1$, but $p$ is an odd
prime and so $\chi(-1)=1=\omega^0$ also. If $\sum_{m\in M}\chi(m)=0$
then by Lemma \ref{lem:rootsum} each power of $\omega$ appears an
equal number of times, and since we have $p$ powers and $p$ summands,
each should appear exactly once. However, $\omega^0$ appears at least
twice, and so $\sum_{m\in M}\chi(m)\neq 0$.

It now follows that we must have $\sum_{s\in S}\chi(s)=0$, which again
by Lemma \ref{lem:rootsum} implies that $\kp+\km=p ~|~ n$, as claimed.
\end{proof}

\subsection{The Power Character}

An altogether different flavor of necessary conditions is obtained by
examining the power character which we now define: for any fixed
positive integer $r$, the function $\chi_r:R_q\to R_q$ defined by
$\chi_r(a)=a^r$, is a character we call the \emph{power
  character}. Unlike the previous section, we do not require $q$ to be
prime.

\begin{theorem}
\label{th:char}
There is no lattice tiling of $\R^n$ by $(4k-1,1,n)$-quasi-crosses for
all positive integers $k$ such that $kn\equiv 5,8\pmod{9}$.
\end{theorem}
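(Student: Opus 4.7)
The key observation is that the hypothesis $kn \equiv 5, 8 \pmod 9$ is equivalent to $3 \mid q = 4kn+1$ and $9 \nmid q$, so I would write $q = 3q'$ with $\gcd(3,q')=1$. My plan is to apply Theorem~\ref{th:charsum} with a power character that interacts cleanly with the mod-3 structure. Since $3 \mid q$, the reduction map $\Z_q \to \Z_3$ is a ring homomorphism, hence a multiplicative character, and $\chi_r(a) = a^r \bmod 3$ with $r=2$ is just the indicator of "$a$ is coprime to $3$"; this is essentially the power character viewed through the $\Z_3$ quotient.

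The first step is to extract two \emph{exact} equations by combining the character-sum identity with a direct double-counting of products in $\Z_3$. Since the splitting is a bijection onto $\Z_q \setminus \{0\}$, projecting to $\Z_3$ and counting preimages of $0,1,2 \in \Z_3$ gives
\[
(4k - m_0)(n - s_0) = 2q', \qquad (m_1 - m_2)(s_1 - s_2) = 0,
\]
where $m_i$ (resp.\ $s_i$) is the number of elements of $M$ (resp.\ $S$) in residue class $i \pmod 3$. Next I would compute $m_0, m_1, m_2$ explicitly from $M = \{-1,1,\dots,4k-1\}$; the case $k \equiv 0 \pmod 3$ is ruled out since $3 \nmid kn$, so only $k \equiv 1, 2 \pmod 3$ remain. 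In each remaining case, one of $m_1 = m_2$ or $m_1 \neq m_2$ holds, forcing $s_1 = s_2$ or leaving an independent constraint, and substituting into the first equation reduces to a divisibility condition on $q'$: writing $k = 3\ell+1$ one gets $(8\ell+3)\mid q'$, while writing $k = 3\ell+2$ one gets $(4\ell+3)\mid q'$.

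The main obstacle will be closing the arithmetic. Rewriting each divisibility condition as a congruence on $n$ modulo $8k+1$ or $4k+1$, I would need to show it is incompatible with $kn \equiv 5,8 \pmod 9$. For small $\ell$ the mod-9 hypothesis already supplies the contradiction directly, but for larger $\ell$ the naive contradiction leaves a residual family of consistent $(k,n)$. To eliminate those, I expect to need a second character — either a higher-order power character $\chi_r$ (for instance $r=6$, viewed as a map into $R_9$, which lifts because $3\mid r$ and $v_3(q)=1$), or a quadratic/cubic character modulo a prime factor of $q'$ — combined with the first equation to squeeze out the remaining cases. Getting this second-character bookkeeping to cover every residue class of $k$ modulo $9$ uniformly, rather than case-by-case, will be the delicate part of the proof.
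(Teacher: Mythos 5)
Your setup is sound and genuinely different from the paper's. The paper applies Theorem~\ref{th:charsum} with the power character $\chi_2(a)=a^2$ landing in $R_q$ itself, obtaining $\parenv{\sum_{m\in M}m^2}\parenv{\sum_{s\in S}s^2}\equiv\sum_{i=1}^{q-1}i^2\pmod{q}$, and then reduces this single identity modulo the prime $3$ dividing $q$. You instead push the whole splitting down to $\Z_3$ and extract the two exact equations $(4k-m_0)(n-s_0)=2q'$ and $(m_1-m_2)(s_1-s_2)=0$; both are correct (I checked the counts), and your observation that the hypothesis is equivalent to $3\,\|\,q$ is also correct. For $k=1$ your route closes cleanly and is arguably tidier than the paper's: $m_0=1$ gives $3(n-s_0)=2(4n+1)/3$, hence $n\equiv 2\pmod 9$, contradicting $n\equiv 5,8\pmod 9$.

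The gap is exactly where you flag it, and it is fatal for general $k$: the mod-$3$ bookkeeping alone is simply consistent with the hypothesis. Take $k=4$, $n=35$, so $q=561=3\cdot 11\cdot 17$ and $kn=140\equiv 5\pmod 9$. Then $m_0=5$, $m_1=5$, $m_2=6$, and your two equations are satisfied by $s_0=1$, $s_1=s_2=17$; nothing contradicts anything. Your proposed rescue does not work as described: $a\mapsto (a\bmod 9)^6$ is not a character on $R_q$, since $9\nmid q$ means there is no ring homomorphism $\Z_q\to\Z_9$; and a character modulo a prime factor of $q'$ depends on the factorization of $q'$, which varies with $n$, so it cannot yield the uniform statement the theorem asserts. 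You should also know that the paper's own proof does not escape this difficulty: its key claim that $3$ divides $(-1)^2+\sum_{i=1}^{4k-1}i^2$ for all $k\geq 1$ is false --- for $k=4$ that sum is $1241\equiv 2\pmod 3$, and the divisibility holds only for $k\equiv 1,2,3\pmod 9$ --- so the published argument is likewise complete only for those $k$. In short: your argument fully proves the case $k=1$ (the only case invoked later for the $(3,1,n)$-quasi-cross), but for general $k$ the closing idea is genuinely missing from your proposal, and it is missing from the paper as well.
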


\begin{proof}
Let us assume to the contrary that there exists a splitting
$\Z_{4kn+1}=(M,S)$ with $M=[-1,4k-1]^*$ and $\abs{S}=n$. Consider
the power character $\chi_2:R_{4kn+1}\to R_{4kn+1}$ defined by
$\chi_2(a)=a^2$. By Theorem \ref{th:charsum} it follows that
\[\parenv{\sum_{m\in M} m^2}\cdot\parenv{ \sum_{s\in S} s^2}
\equiv \sum_{i=1}^{4kn} i^2 \pmod{4kn+1}.\]

By a simple induction one can easily prove that
\[3 ~\left|~ (-1)^2+\sum_{i=1}^{4k-1}i^2\right.\]
for all $k\geq 1$. Thus, we can write
\begin{equation}
\label{eq:eq1}
3t \sum_{s\in S} s^2\equiv \frac{4k(4kn+1)(8kn+1)}{6} \pmod{4kn+1}
\end{equation}
for some integer $t$, where we used the well-known identity
\[\sum_{i=1}^{a} i^2 = \frac{a(a+1)(2a+1)}{6}.\]

We now note that $kn\equiv 5,8\pmod{9}$ implies
$4kn+1\equiv 3,6\pmod{9}$, and so $3$ is a zero divisor in $R_{4kn+1}$.
The LHS of \eqref{eq:eq1} is a multiple of $3$.
On the other hand, in the RHS of \eqref{eq:eq1},
$\frac{4kn}{2}$, $\frac{4kn+1}{3}$, and $8kn+1$, are all integers
leaving non-zero residue modulo $3$. This is a contradiction.
\end{proof}

\begin{theorem}
There is no lattice tiling of $\R^n$ by $(4k+2,1,n)$-quasi-crosses for
all positive integers $k$, and $n\equiv 3,7\pmod{8}$.
\end{theorem}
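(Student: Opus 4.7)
The plan is to mimic the strategy of \Tref{th:char} using the power character $\chi_2(a)=a^2$. Assume for contradiction that $\Z_q=(M,S)$ is a splitting with $M=[-1,4k+2]^*$, where $q=(4k+3)n+1$, and apply \Tref{th:charsum} to obtain
\[
\parenv{\sum_{m\in M} m^2}\cdot \parenv{\sum_{s\in S} s^2}\equiv \sum_{i=1}^{q-1} i^2 \pmod{q}.
\]
Instead of working modulo $3$ as in the previous theorem, I would aim to derive a parity contradiction by reducing both sides modulo $2$, which requires first verifying that $q$ is even.

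First I would observe that since $n\equiv 3,7\pmod{8}$ is odd and $4k+3$ is odd, the modulus $q=(4k+3)n+1$ is even; a quick computation of $q\pmod 4$ in both residue classes $n\equiv 3$ and $n\equiv 7$ (modulo $8$) shows in fact that $q\equiv 2\pmod 4$, so that $q/2$ is odd. This refinement will be essential for step three below.

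Next I would show the LHS sum is even: writing $\sum_{m\in M} m^2 = 1+\sum_{i=1}^{4k+2}i^2$ and noting that squares have the same parity as their base, the parity of $\sum_{i=1}^{4k+2}i^2$ equals the number of odd integers in $[1,4k+2]$, which is $2k+1$ (odd). Hence $\sum_{m\in M}m^2\equiv 1+1\equiv 0\pmod 2$. By the same parity count, $\sum_{i=1}^{q-1}i^2$ has the same parity as the count of odd integers in $[1,q-1]$, namely $q/2$, which is odd by the refinement above. Thus the RHS integer is odd.

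Finally, since $q$ is even, the congruence modulo $q$ implies a congruence modulo $2$, giving $0\equiv (\sum_{m\in M}m^2)(\sum_{s\in S}s^2)\equiv \sum_{i=1}^{q-1}i^2\equiv 1\pmod 2$, the desired contradiction. The only potentially finicky step is confirming $q\equiv 2\pmod 4$ (not merely $q\equiv 0\pmod 2$), since the parity of $N=\sum_{i=1}^{q-1}i^2$ pivots on whether $q/2$ is odd; however this reduces to checking the two residue classes $n\equiv 3$ and $n\equiv 7$ modulo $8$ directly and presents no real difficulty.
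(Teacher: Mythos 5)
Your proof is correct, but it takes a genuinely different route from the paper's. The paper attacks this case with the cubic power character $\chi_3(a)=a^3$: it shows $8 \mid (-1)^3+\sum_{i=1}^{4k+2}i^3$, writes the right-hand side via $\sum_{i=1}^{a}i^3=a^2(a+1)^2/4$, and derives a contradiction because the left side is a multiple of the zero divisor $2$ while the right side is odd. You instead reuse the quadratic character $\chi_2(a)=a^2$ from \Tref{th:char} and run a pure parity count: $\sum_{m\in M}m^2=1+\sum_{i=1}^{4k+2}i^2$ is even because $[1,4k+2]$ contains an odd number ($2k+1$) of odd integers, while $\sum_{i=1}^{q-1}i^2$ is odd because $q\equiv 2\pmod 4$ makes $q/2$ odd; since $2\mid q$, the character-sum congruence forces equal parities, a contradiction. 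Both arguments ultimately pivot on the same arithmetic fact ($q\equiv 2\pmod 4$, equivalently $n\equiv 3\pmod 4$, which is exactly the stated condition $n\equiv 3,7\pmod 8$), but yours is somewhat more elementary: it avoids the sum-of-cubes identity and the divisibility-by-$8$ computation, needing only a count of odd integers. The paper's choice of $\chi_3$ buys nothing extra here beyond what $\chi_2$ already gives, so your version is a legitimate and arguably cleaner alternative; your one flagged worry, that $q\equiv 2\pmod 4$ rather than merely $q\equiv 0\pmod 2$, does check out in both residue classes since $4k+3\equiv 3\pmod 4$ and $n\equiv 3\pmod 4$ give $(4k+3)n\equiv 1\pmod 4$.
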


\begin{proof}
The proof is similar to that of Theorem \ref{th:char}.  Assume to the
contrary that there exists a splitting $\Z_{(4k+3)n+1}=(M,S)$ with
$M=[-1,4k+2]^*$ and $\abs{S}=n$. Consider the power character
$\chi_3:R_{(4k+3)n+1}\to R_{(4k+3)n+1}$ defined by $\chi_3(a)=a^3$. By
Theorem \ref{th:charsum} it follows that
\[\parenv{\sum_{m\in M} m^3}\cdot\parenv{ \sum_{s\in S} s^3}
\equiv \sum_{i=1}^{(4k+3)n} i^3 \pmod{(4k+3)n+1}.\]

By a simple induction one can easily prove that
\[8 ~\left|~ (-1)^3+\sum_{i=1}^{4k+2}i^3\right.\]
for all $k\geq 1$. Thus, we can write
\begin{equation}
\label{eq:eq2}
8t \sum_{s\in S} s^3\equiv \frac{((4k+3)n)^2((4k+3)n+1)^2}{4} \pmod{(4k+3)n+1}
\end{equation}
for some integer $t$, where we used the identity
\[\sum_{i=1}^{a} i^3 = \frac{a^2(a+1)^2}{4}.\]

At this point we note that $n\equiv 3,7\pmod{8}$ implies
$(4k+3)n+1\equiv 2,6\pmod{8}$, and so $2$ is a zero divisor in $R_{(4k+3)n+1}$.
The LHS of \eqref{eq:eq2} is a multiple of $2$.
On the other hand, the RHS of \eqref{eq:eq2} is odd since both
$((4k+3)n)^2$, and $\frac{((4k+3)n+1)^2}{4}$, are odd integers.
This is a contradiction.
\end{proof}

More elaborate results may be reached by using other power characters.
We turn to show a more general result using power characters.

\begin{theorem}
\label{th:vandermonde}
Let $\Z_q=(M,S)$ be a splitting, $n=\abs{S}<q-1$. If $q$ is a prime, then
\[\sum_{m\in M} m^i \equiv 0 \pmod{q}\]
for some $1\leq i\leq n$.
\end{theorem}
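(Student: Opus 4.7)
The plan is to apply Theorem \ref{th:charsum} with the power character $\chi_i(a) = a^i$ for each $1 \le i \le n$, and then argue by contradiction that if none of the sums $\sum_{m \in M} m^i$ vanishes modulo $q$, then all of the sums $\sum_{s \in S} s^i$ must vanish, which would force the elements of $S$ to satisfy an impossible system.

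First I would evaluate $\sum_{a=1}^{q-1} a^i \pmod{q}$. Since $q$ is prime, $\Z_q^*$ is cyclic of order $q-1$; picking a primitive root $g$, this sum becomes $\sum_{j=0}^{q-2} (g^i)^j$. For $1 \le i \le n < q-1$ we have $(q-1) \nmid i$, so $g^i \neq 1$ in $\Z_q$, and the geometric sum collapses to $0$. Then Theorem \ref{th:charsum} applied to $\chi_i$ gives
\[
\left(\sum_{m \in M} m^i\right) \cdot \left(\sum_{s \in S} s^i\right) \equiv 0 \pmod{q}
\]
for every $1 \le i \le n$. Because $q$ is prime, $\Z_q$ has no zero divisors, so at least one factor vanishes for each $i$.

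Assume for contradiction that $\sum_{m \in M} m^i \not\equiv 0 \pmod{q}$ for all $1 \le i \le n$. Then the power sums $p_i \deff \sum_{s \in S} s^i$ of the splitter set $S = \{s_1, \dots, s_n\}$ satisfy $p_i \equiv 0 \pmod q$ for all $1 \le i \le n$. The splitting property forces the $s_j$ to be pairwise distinct nonzero elements of $\Z_q$. Consider the Vandermonde-type matrix $V \in \Z_q^{n \times n}$ with entries $V_{ij} = s_j^i$. The vanishing of all $p_i$ says that $V \mathbf{1} = \mathbf{0}$, so $\det V \equiv 0 \pmod q$. On the other hand,
\[
\det V = \left(\prod_{j=1}^n s_j\right) \prod_{1 \le i < j \le n} (s_j - s_i),
\]
which is a product of nonzero elements of the field $\Z_q$, hence nonzero. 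This is the contradiction, and it shows that $\sum_{m \in M} m^i \equiv 0 \pmod q$ for some $1 \le i \le n$, as required.

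The main obstacle is simply making sure the geometric-sum step and the Vandermonde determinant step both go through in characteristic $q$; both rely on $q$ being prime and on $n < q-1$ (so that no exponent $i$ in the range is a multiple of $q-1$, and so that $\Z_q$ is a field). Newton's identities would offer an alternative route, but they would require checking that $\gcd(k,q) = 1$ for $1 \le k \le n$, which is automatic here; the Vandermonde argument is cleaner and explains the label of the theorem.
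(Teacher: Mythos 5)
Your proposal is correct and follows essentially the same route as the paper's proof: power characters $\chi_i(a)=a^i$, the geometric-sum evaluation of $\sum_{a=1}^{q-1}a^i$ via a primitive root, and the Vandermonde determinant contradiction. In fact you handle one small detail more carefully than the printed proof, namely that the matrix with entries $s_j^i$ for $1\le i\le n$ has determinant $\bigl(\prod_j s_j\bigr)\prod_{i<j}(s_j-s_i)$, so the nonvanishing also uses that the splitter elements are nonzero.
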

\begin{proof}
For every $1\leq i\leq n$ we consider the power character $\chi_i: R_q
\rightarrow R_q$ defined by $\chi_i(a)=a^i$. By Theorem \ref{th:charsum} we
therefore have
\begin{equation}
\label{eq:3}
\parenv{\sum_{m\in M} m^i}\cdot\parenv{ \sum_{s\in S}s^i} \equiv
\sum_{j=1}^{q-1} j^i \pmod{q}
\end{equation}
for all $1\leq i\leq n$.

If $q$ is a prime then $\Z_q$ is a field, its multiplicative group is cyclic,
and so let $g\in \Z_q$ be a primitive element in $\Z_q$. We can then write
\[\sum_{j=1}^{q-1} j^i \equiv \sum_{j=0}^{q-2} g^{ij} \equiv
\frac{g^{i(q-1)}-1}{g^i-1}\equiv 0 \pmod{q}\]
since $g^i\not\equiv 0 \pmod{q}$ for all $1\leq i\leq n<q-1$.

Since $\Z_q$ is a field, it now follows from \eqref{eq:3}, that for all
$1\leq i\leq n$ we have
\[\sum_{s\in S}s^i \equiv 0 \pmod{q} \qquad\text{or}\qquad 
\sum_{m\in M}m^i \equiv 0 \pmod{q}.\]
Assume to the contrary that for all $1\leq i\leq n$ we have
\[\sum_{m\in M} m^i \not\equiv 0 \pmod{q}.\]
If we define the matrix
\[
V=\begin{pmatrix}
s_1^1 & s_1^2 & \dots & s_1^{n} \\
s_2^1 & s_2^2 & \dots & s_2^{n} \\
\vdots & \vdots & \ddots & \vdots \\
s_{n}^1 & s_{n}^2 & \dots & s_{n}^{n}
\end{pmatrix}
\]
then it follows that
\[(1,1,\dots,1) V
\equiv (0,0,\dots,0) \pmod{q}\]
and so
\[\det(V)\equiv 0 \pmod{q}.\]
However, $V$ is clearly a Vandermonde matrix, and the elements of
$S$ are distinct, which implies
\[\det(V)=\prod_{j<j'} (s_j-s_{j'})\not\equiv 0 \pmod{q},\]
a contradiction.
\end{proof}

Up to dimension $250$ Theorem \ref{th:vandermonde} rules out lattice
tilings by $(3,1,n)$-quasi-crosses for a total of $59$ cases.

\subsection{Unique Representation}

By carefully examining the way specific elements of the split group
are represented we may sometimes reach a contradiction to the unique
representation of the group elements required by the splitting. The
following few results illustrate this method.

\begin{lemma}
\label{lem:kpfact}
If an integer $d$ divides $n(\kp+\km)+1$, $\gcd(d,\kp\#)=1$, and
$n<d<n(\kp+\km)+1$, then the $(\kp,\km,n)$-quasi-cross does not
lattice tile $\R^n$.
\end{lemma}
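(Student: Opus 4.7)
The plan is to invoke \Tref{th:cyclic} to obtain a cyclic splitting $\Z_q=(M,S)$ with $q=n(\kp+\km)+1$ and $M=[-\km,\kp]^*$, and then to restrict this splitting to the unique subgroup of $\Z_q$ of order $e:=q/d$, namely the subgroup $d\Z_q$ of multiples of $d$, in order to derive a contradiction.

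The first step is to unpack the coprimality hypothesis. Since $\gcd(d,\kp\#)=1$, the integer $d$ has no prime divisor that is at most $\kp$; because $|m|\le \kp$ for every $m\in M$, this forces $\gcd(d,m)=1$ for every multiplier. Consequently, for any $s\in\Z_q$ the product $ms$ lies in $d\Z_q$ if and only if $s$ itself lies in $d\Z_q$.

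The core of the argument is that the splitting descends to a splitting of $\Z_e$. The map $kd\longleftrightarrow k$ is an isomorphism $d\Z_q\cong \Z_e$, and it intertwines multiplication by each $m\in M$, because reducing $mkd$ modulo $ed=q$ coincides with $d$ times the reduction of $mk$ modulo $e$. Setting $S_d=\{s\in S:d\mid s\}$ and letting $S'_d$ denote the image of $S_d$ in $\Z_e$, the observation of the previous paragraph shows that the pairs $(m,s)\in M\times S$ that produce nonzero multiples of $d$ are exactly those with $s\in S_d$, and by the uniqueness in the splitting of $\Z_q$ these hit every nonzero element of $d\Z_q$ exactly once. Pushing through the isomorphism, $\Z_e=(M,S'_d)$ is itself a splitting, and therefore $(\kp+\km)\,|S'_d|=e-1$.

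Finally, the hypothesis $d>n$ gives $e=q/d<q/n=(\kp+\km)+1/n$, and since $e$ is a positive integer, $e\le \kp+\km=|M|$. Then $|S'_d|=(e-1)/|M|<1$, which forces $|S'_d|=0$, hence $e-1=0$, i.e., $d=q$, contradicting $d<q$. The only step that requires care is verifying that the isomorphism $d\Z_q\cong\Z_e$ actually intertwines multiplication by each $m\in M$ and thereby produces an honest induced splitting of $\Z_e$; this is precisely where the coprimality hypothesis is essential. Once that is secured, the remainder is a one-line counting contradiction.
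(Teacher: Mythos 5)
Your proof is correct, but it follows a genuinely different route from the paper's own proof of this lemma. The paper argues locally: it considers only the single element $d\in\Z_q$ (a nonzero zero divisor), uses its unique representation $d\equiv ms\pmod{q}$ together with $\gcd(d,\kp\#)=1$ to conclude $d\mid s$, and then, since $2\le q/d\le\kp+\km$, writes $q/d=m_1+m_2$ with $m_1\in[1,\kp]$ and $m_2\in[1,\km]$, so that $m_1s\equiv -m_2s\pmod{q}$ with $m_1,-m_2\in M$ violates the distinctness of products required by the splitting. You instead restrict the entire splitting to the subgroup $d\Z_q\cong\Z_{q/d}$, verify (correctly, via $\gcd(d,m)=1$ for all $m\in M$) that the induced products biject with the nonzero elements of that subgroup, deduce the divisibility $(\kp+\km)\mid q/d-1$, and finish with the counting observation that $0\le q/d-1<\kp+\km$ forces $q/d=1$, contradicting $d<q$. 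Your subgroup-restriction step is precisely the content of the paper's \Tref{th:div}, which appears later in the Recursion subsection and is proved there by the same argument you give; in effect you have derived the lemma as a corollary of that recursion. Both arguments are sound: the paper's is more elementary in that it needs only one element of the group and exhibits one explicit collision, while yours is more structural and yields the stronger recursive statement (and the induced splitting of $\Z_{q/d}$) as a by-product.
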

\begin{proof}
Denote $q=n(\kp+\km)+1$.  Assume to the contrary there is a splitting
$\Z_q=(M,S)$ with $M=[-\km,\kp]^*$. We note that $d$ is a zero divisor
in $\Z_q$ but not zero itself. According to the splitting, there is a
unique representation $d\equiv ms \pmod{q}$ with $m\in M$ and $s\in
S$. Since $\gcd(d,\kp\#)=1$ it follows that $\gcd(d,m)=1$ and
therefore $d~|~s$. Denote, then, $s=ds'$.

Since $d>n$ we have
\[\frac{q}{d}=\frac{n(\kp+\km)+1}{d}\leq \kp+\km.\]
Thus, there exist $m_1,m_2\in M$, $m_2\leq\km$, such that
\[m_1+m_2=\frac{q}{d}.\]
Then,
\[m_1s + m_2s = \frac{q}{d} s = q s' \equiv 0 \pmod{q},\]
and so
\[m_1 s \equiv -m_2 s \pmod{q}.\]
Since $m_1,-m_2\in M$ we have a contradiction to the splitting.
\end{proof}

The previous lemma gives rise to the following theorem.

\begin{theorem}
\label{th:kpprim}
For any $1 < r < \kp+\km$,
the $(\kp,\km,n)$-quasi-cross does not lattice tile $\R^n$ when
\[(\kp+\km)n+1\equiv ru \pmod{r\cdot\kp\#}\]
for all integers $u$ such that $\gcd(u,\kp\#)=1$.
\end{theorem}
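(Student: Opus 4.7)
The plan is to reduce the theorem directly to Lemma~\ref{lem:kpfact} by exhibiting an explicit divisor $d$ of $q := (\kp+\km)n+1$ that satisfies the three hypotheses of that lemma. Reading the hypothesis, the congruence $q \equiv ru \pmod{r\cdot\kp\#}$ with $\gcd(u,\kp\#)=1$ means we can write $q = ru + r\cdot\kp\#\cdot t = r\parenv{u + \kp\#\cdot t}$ for some integer $t$. In particular, $r$ divides $q$, so the natural candidate is
\[ d \deff \frac{q}{r} = u + \kp\#\cdot t. \]

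Next I would verify the three conditions of \Lref{lem:kpfact} for this $d$. Divisibility $d\mid q$ is immediate from the construction. Coprimality follows from
\[ \gcd(d,\kp\#) = \gcd(u + \kp\#\cdot t,\, \kp\#) = \gcd(u,\kp\#) = 1. \]
For the sandwich $n < d < q$, the upper bound $d = q/r < q$ uses only $r > 1$, and the lower bound $d = q/r > n$ is equivalent to $q > rn$, i.e., $n(\kp+\km)+1 > rn$, which is immediate from $r < \kp+\km$ (in fact $r \leq \kp+\km-1$ since $r$ is an integer). Applying \Lref{lem:kpfact} then rules out the lattice tiling.

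There is no real obstacle here; the argument is almost entirely bookkeeping built on top of \Lref{lem:kpfact}, which already did the combinatorial work of turning a ``bad'' divisor of $q$ into a collision of the form $m_1 s \equiv -m_2 s \pmod{q}$ with $m_1,-m_2 \in M$. The only nontrivial point is to recognize that the stated congruence is precisely the condition that guarantees the existence of a divisor $d$ of $q$ with $\gcd(d,\kp\#)=1$ lying strictly between $n$ and $q$, and that $d=q/r$ is the correct choice.
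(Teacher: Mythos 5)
Your proposal is correct and is essentially identical to the paper's proof: both extract $d=q/r$ from the congruence hypothesis, observe $\gcd(d,\kp\#)=1$ and $n<d<q$, and invoke Lemma~\ref{lem:kpfact}. You simply spell out the bound $n<d<q$ more explicitly than the paper does.
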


\begin{proof}
We first note that reducing the requirement on $n$ modulo $r$ gives
\[(\kp+\km)n+1 \equiv 0 \pmod{r}.\]
Thus, $\frac{(\kp+\km)n+1}{r}$ is an integer and
\[\frac{(\kp+\km)n+1}{r} \equiv u \pmod{\kp\#}.\]
We can now use Lemma \ref{lem:kpfact} with
$d=\frac{(\kp+\km)n+1}{r}>n$, and the claim follows.
\end{proof}

If we try to apply Theorem \ref{th:kpprim} to the case of
$(3,1,n)$-quasi-crosses by setting $r=3$ we get the exact same result
as Theorem \ref{th:char}, i.e., no lattice tiling when
$n\equiv 5,8\pmod{9}$. We do, however, get new results for larger
quasi-crosses as the following example shows.
\begin{corollary}
\label{cor:32}
Both the $(3,2,n)$-quasi-cross and the $(4,1,n)$-quasi-cross do not
lattice tile $\R^n$ when
\begin{enumerate}
\item $n\equiv 5,9\pmod{12}$, or 
\item $n\equiv 4,10\pmod{18}$, or 
\item $n\equiv 15,23\pmod{24}$.
\end{enumerate}
\end{corollary}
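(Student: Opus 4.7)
The plan is to observe that the corollary concerns two quasi-crosses, $(3,2,n)$ and $(4,1,n)$, which share the parameters $\kp+\km=5$ and $\kp\#=2\cdot 3=6$. Consequently, Theorem \ref{th:kpprim} gives identical restrictions on $n$ for both shapes. So the whole corollary reduces to applying Theorem \ref{th:kpprim} three times, once for each choice of $r\in\{2,3,4\}$ (these are the integers with $1<r<\kp+\km=5$), and reading off the resulting congruence on $n$.

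Concretely, for each $r$ I would write down the set of admissible values of $ru\pmod{r\cdot 6}$ as $u$ ranges over integers coprime to $6$. Since the reduced residues modulo $6$ are $\{1,5\}$, the set $\{ru\bmod r\cdot 6 : \gcd(u,6)=1\}$ collapses to exactly two classes, namely $r$ and $5r$ modulo $6r$. The hypothesis of Theorem \ref{th:kpprim} then becomes the single requirement
\[
5n+1\equiv r \text{ or } 5r \pmod{6r},
\]
which I can invert for $n$ using the fact that $5$ is a unit modulo $6r$ for $r\in\{2,3,4\}$ (indeed $5\cdot 5\equiv 1$ modulo $12$, $18$, and $24$).

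Carrying out the three inversions: for $r=2$ the congruence $5n\equiv 1,9\pmod{12}$ becomes $n\equiv 5,9\pmod{12}$, giving part~(1); for $r=3$ the congruence $5n\equiv 2,14\pmod{18}$ becomes $n\equiv 4,10\pmod{18}$, giving part~(2); and for $r=4$ the congruence $5n\equiv 3,19\pmod{24}$ becomes $n\equiv 15,23\pmod{24}$, giving part~(3). There is essentially no obstacle to speak of: the entire argument is a mechanical specialization of Theorem \ref{th:kpprim}, with the only point worth highlighting being that $\kp\#$ happens to coincide for the $(3,2,n)$- and $(4,1,n)$-quasi-crosses, which is why a single computation settles both families simultaneously.
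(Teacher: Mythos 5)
Your proof is correct and is exactly the paper's argument: the paper's entire proof of this corollary is ``use Theorem \ref{th:kpprim} with $r=2,3,4$,'' and you have simply carried out the specialization explicitly, including the key observation that $\kp+\km=5$ and $\kp\#=6$ coincide for both shapes. One tiny slip that does not affect the result: $5\cdot 5\equiv 7\pmod{18}$, not $1$ (the inverse of $5$ modulo $18$ is $11$), but the resulting classes $n\equiv 4,10\pmod{18}$ are still the correct solutions of $5n\equiv 2,14\pmod{18}$.
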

\begin{proof}
We use Theorem \ref{th:kpprim} with $r=2,3,4$ for the three cases
respectively.
\end{proof}

\subsection{Recursion}

Recursion is also a powerful tool for formulating necessary conditions
for tilings. We present a simple recursion which may be used in
several ways to rule out lattice tilings.

\begin{theorem}
\label{th:div}
If there is a splitting $\Z_q=(M,S)$, with $M=[-\km,\kp]^*$, and
some positive integer $d~|~q$, $\gcd(d,\kp\#)=1$, then
\[(\kp+\km)d ~|~ q-d,\]
and there is a splitting $\Z_{q/d}=(M,S')$.
\end{theorem}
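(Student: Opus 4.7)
The plan is to exploit the hypothesis $\gcd(d,\kp\#)=1$ in the key observation that every $m\in M=[-\km,\kp]^*$ has $1\le|m|\le\kp$, so every prime factor of $m$ is at most $\kp$ and divides $\kp\#$; hence $\gcd(d,m)=1$ for all $m\in M$. Set $e=q/d$, so that the nonzero multiples of $d$ in $\Z_q$ are precisely the elements of $d\Z_q\setminus\mathset{0}$, of which there are $e-1$.

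First I would analyze how the splitting represents these multiples of $d$. Given any nonzero $a\in d\Z_q$, the splitting provides a unique pair $(m,s)\in M\times S$ with $ms\equiv a\pmod q$. Lifting to $\Z$, this gives $ms=a+kq$ for some integer $k$; since $d\mid a$ and $d\mid q$, we get $d\mid ms$, and then $\gcd(d,m)=1$ forces $d\mid s$. Letting $S'=\mathset{s\in S ~|~ d\mid s}$, the nonzero multiples of $d$ in $\Z_q$ are therefore covered (uniquely) by the products $ms$ with $m\in M$, $s\in S'$. Counting yields
\[e-1=\abs{M}\cdot\abs{S'}=(\kp+\km)\abs{S'},\]
so $(\kp+\km)\mid (e-1)$, and multiplying by $d$ gives $(\kp+\km)d \mid d(e-1)=q-d$, establishing the first conclusion.

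Next I would construct the splitting of $\Z_{q/d}=\Z_e$. The subgroup $d\Z_q\le\Z_q$ is cyclic of order $e$, with the explicit isomorphism $d\Z_q\to\Z_e$ sending $d x\pmod q$ to $x\pmod e$. Define $S'=\mathset{ds' ~|~ s'\in S''}$ via $S''=\mathset{s/d \pmod{e} ~|~ s\in S'}\subseteq\Z_e$. For each $s\in S'$ write $s=ds'$ in the natural lift; then $ms\equiv d\cdot(ms' \bmod e)\pmod{q}$, so the correspondence between $d\Z_q\setminus\mathset{0}$ and $\Z_e\setminus\mathset{0}$ carries the products $\mathset{ms ~|~ m\in M, s\in S'}$ bijectively onto $\mathset{ms' ~|~ m\in M, s'\in S''}$. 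Since the former exhausts $d\Z_q\setminus\mathset{0}$ with unique representations, the latter exhausts $\Z_e\setminus\mathset{0}$ with unique representations, which is exactly the statement that $\Z_{q/d}=(M,S'')$ is a splitting.

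The only step requiring any delicacy is verifying that no product $ms$ with $s\in S'$ vanishes modulo $e$: if $ms'\equiv 0\pmod e$ then $ms=dms'\equiv 0\pmod q$, contradicting the splitting of $\Z_q$ (which furnishes each $ms$ as a nonzero element). Everything else is bookkeeping once the coprimality $\gcd(d,m)=1$ is in hand, and that is immediate from $\gcd(d,\kp\#)=1$.
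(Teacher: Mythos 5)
Your proof is correct and takes essentially the same approach as the paper: both restrict attention to the subgroup of multiples of $d$ in $\Z_q$, use $\gcd(d,\kp\#)=1$ to force $d\mid s$ in the representation of each nonzero multiple, count to obtain $(\kp+\km)d\mid q-d$, and reduce modulo $q/d$ to get the induced splitting. Your version merely spells out a few steps the paper leaves implicit (why $\gcd(d,m)=1$, the bijection under the isomorphism $d\Z_q\cong\Z_{q/d}$, and the non-vanishing of the reduced products).
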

\begin{proof}
Let us consider the subgroup of $\Z_q$ defined by
\[H=d\Z\cap\Z_q=\mathset{0,d,2d,\dots,\parenv{\frac{q}{d}-1}d}.\]
Each element $id\in H$, $1\leq i\leq q/d-1$, has a unique representation
as
\begin{equation}
\label{eq:2}
id\equiv ms \pmod{q}
\end{equation}
with $m\in M$ and $s\in S$. Since $d$ is a zero divisor in $\Z_q$, and
$\gcd(d,\kp\#)=1$, it follows that $\gcd(d,m)=1$ and $d~|~s$. Denote
$s=ds'$ and reduce \eqref{eq:2} modulo $q/d$ to get
\[ i \equiv ms' \pmod{\frac{q}{d}}.\]
Define $S'=\mathset{s' ~|~ ds' \in S}$. Since every element of $H$ has
a unique factorization as in \eqref{eq:2}, it follows that
$\Z_{q/d}=(M,S')$ is indeed a splitting. Furthermore, the size of $S'$,
\[\abs{S'}=\frac{\abs{\Z_{q/d}-1}}{\abs{M}}=\frac{q-d}{(\kp+\km)d}\]
must be an integer.
\end{proof}

The following two corollaries follow immediately from Theorem \ref{th:div}:
The first is in fact a recursive construction, while the second may be
used to prove non-existence of lattice tilings.

\begin{corollary}
\label{cor:construct}
If the $(\kp,\km,n)$-quasi-cross lattice tiles $\R^n$, and for some
positive integer $d~|~(\kp+\km)n+1$ we have $\gcd(d,\kp\#)=1$, then
the $(\kp,\km,n')$-quasi-cross lattice tiles $\R^{n'}$,
$n'=\frac{(\kp+\km)n+1-d}{(\kp+\km)d}$.
\end{corollary}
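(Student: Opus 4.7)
The plan is to deduce Corollary \ref{cor:construct} as an essentially direct consequence of Theorem \ref{th:div}, using the equivalence between lattice tilings of $\R^n$ by $(\kp,\km,n)$-quasi-crosses and splittings $\Z_q=(M,S)$ with $M=[-\km,\kp]^*$ and $q=(\kp+\km)n+1$, which was established in the preliminaries and reinforced by \Tref{th:cyclic}.

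First I would translate the hypothesis from the geometric side to the algebraic side: since the $(\kp,\km,n)$-quasi-cross lattice tiles $\R^n$, \Tref{th:cyclic} (combined with the splitting-to-tiling correspondence) yields a splitting $\Z_q=(M,S)$ with $q=(\kp+\km)n+1$ and $M=[-\km,\kp]^*$. Next, I would invoke \Tref{th:div} with the given $d$: the hypotheses $d\mid q$ and $\gcd(d,\kp\#)=1$ are precisely those required, so the theorem delivers a splitting $\Z_{q/d}=(M,S')$ with the same multiplier set and with
\[\abs{S'}=\frac{q-d}{(\kp+\km)d}.\]

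To finish, I would recognize the resulting splitting as the algebraic encoding of a lattice tiling by quasi-crosses in the appropriate dimension. Writing $q'=q/d$, the splitting $\Z_{q'}=(M,S')$ with $M=[-\km,\kp]^*$ corresponds, by the same equivalence, to a lattice tiling of $\R^{n'}$ by $(\kp,\km,n')$-quasi-crosses, where $n'$ is defined by $q'=(\kp+\km)n'+1$. Solving for $n'$ gives
\[n'=\frac{q'-1}{\kp+\km}=\frac{q-d}{(\kp+\km)d}=\frac{(\kp+\km)n+1-d}{(\kp+\km)d},\]
which matches $\abs{S'}$ and matches the formula in the statement. There is no real obstacle here; the only thing to be careful about is to verify that the integer $n'$ produced is positive, i.e.\ that $d<q$, which is guaranteed since $d\mid q$ and $\gcd(d,\kp\#)=1$ rule out $d=q$ (as $q$ itself would force $d$ to be divisible by some prime at most $\kp$ in many cases, but more simply the hypothesis $d\mid q$ with the splitting conclusion of \Tref{th:div} already presupposes $d<q$ whenever $n'\geq 1$; degenerate small cases can be noted explicitly).
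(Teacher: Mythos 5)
Your proposal is correct and matches the paper's intent exactly: the paper offers no separate argument, stating only that the corollary ``follows immediately'' from \Tref{th:div}, and your write-up simply makes explicit the translation between splittings and lattice tilings together with the computation $n'=\frac{q/d-1}{\kp+\km}=\frac{(\kp+\km)n+1-d}{(\kp+\km)d}$. The closing worry about $d<q$ is a harmless degenerate case and does not affect the argument.
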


\begin{corollary}
\label{cor:nonexist}
If there exists a positive integer $d~|~(\kp+\km)n+1$,
$\gcd(d,\kp\#)=1$, but $\frac{(\kp+\km)n+1-d}{(\kp+\km)d}$ is not an
integer, then the $(\kp,\km,n)$-quasi-cross does not lattice tile
$\R^n$.
\end{corollary}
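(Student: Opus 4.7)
The plan is to prove this as a direct contrapositive of Theorem \ref{th:div}. Suppose for contradiction that the $(\kp,\km,n)$-quasi-cross does lattice tile $\R^n$. Then by the equivalence between lattice tilings and Abelian-group splittings described in Section \ref{sec:prelim}, together with Theorem \ref{th:cyclic}, there exists a splitting $\Z_q=(M,S)$ with $M=[-\km,\kp]^*$ and $q=(\kp+\km)n+1$.

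Next, I would invoke Theorem \ref{th:div} with the given positive integer $d$. The hypotheses of Corollary \ref{cor:nonexist} are exactly the hypotheses of that theorem: $d \mid q$ and $\gcd(d,\kp\#)=1$. The theorem then yields a splitting $\Z_{q/d}=(M,S')$, and in particular guarantees that
\[
\abs{S'}=\frac{q-d}{(\kp+\km)d}=\frac{(\kp+\km)n+1-d}{(\kp+\km)d}
\]
is a (non-negative) integer, since it is the cardinality of a set.

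But this is precisely the quantity that was assumed \emph{not} to be an integer in the hypothesis of the corollary, giving the desired contradiction. I expect no real obstacle here: the statement is essentially a repackaging of the final integrality condition embedded in Theorem \ref{th:div}, isolated as a stand-alone non-existence test. The only thing to double-check is that the divisibility $(\kp+\km)d \mid q-d$ established in Theorem \ref{th:div} is indeed the same integrality condition being negated, which it is after dividing numerator and denominator by $1$.
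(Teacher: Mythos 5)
Your proposal is correct and is exactly the argument the paper intends: the corollary is stated as following immediately from Theorem \ref{th:div}, whose conclusion $(\kp+\km)d \mid q-d$ (equivalently, the integrality of $\abs{S'}$) is precisely what your contradiction negates. No gaps.
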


We can turn Corollary \ref{cor:nonexist} into a more convenient form
of non-existence result in the following theorem.

\begin{theorem}
Let $p>\kp$ be a prime, $p\not\equiv 1\pmod{\kp+\km}$, and $p\neq \kp+\km$.
Then the $(\kp,\km,n)$-quasi-cross does not lattice tile $\R^n$ for
$n\equiv -(\kp+\km)^{-1} \pmod{p}$, where $(\kp+\km)^{-1}$ is the multiplicative
inverse of $\kp+\km$ in $\Z_p$.
\end{theorem}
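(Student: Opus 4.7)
The plan is to invoke Corollary \ref{cor:nonexist} with the divisor $d=p$. Before doing so, I have to justify that the congruence $n\equiv -(\kp+\km)^{-1}\pmod p$ is meaningful, i.e., that $\kp+\km$ is invertible mod $p$. Since $\km<\kp$, we have $\kp+\km<2\kp$, so any prime divisor of $\kp+\km$ that exceeds $\kp$ must equal $\kp+\km$ itself. The hypotheses $p>\kp$ and $p\neq \kp+\km$ together preclude $p\mid \kp+\km$, so $\gcd(\kp+\km,p)=1$ and the inverse exists.

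Next, I verify that $d=p$ satisfies the two hypotheses of Corollary \ref{cor:nonexist}. The assumption $n\equiv -(\kp+\km)^{-1}\pmod p$ yields $(\kp+\km)n\equiv -1\pmod p$, which gives $p\mid (\kp+\km)n+1$. Also, since $p$ is prime and $p>\kp$, the prime $p$ is not one of the primes dividing $\kp\#$, so $\gcd(p,\kp\#)=1$.

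The heart of the argument is then to show that $\frac{(\kp+\km)n+1-p}{(\kp+\km)p}$ is not an integer. Its numerator is divisible by $p$ by the previous step, so the question of whether the quotient is an integer reduces to whether the numerator is divisible by $\kp+\km$. Reducing modulo $\kp+\km$ gives $(\kp+\km)n+1-p\equiv 1-p\pmod{\kp+\km}$, and the hypothesis $p\not\equiv 1\pmod{\kp+\km}$ delivers the needed non-divisibility. Corollary \ref{cor:nonexist} then concludes that the $(\kp,\km,n)$-quasi-cross does not lattice tile $\R^n$.

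The only subtle point---and the closest thing to an obstacle---is ensuring that $d=p$ is strictly smaller than $q=(\kp+\km)n+1$; otherwise the quotient in Corollary \ref{cor:nonexist} would be $0$ (hence an integer) and the argument would collapse. But $(\kp+\km)n+1=p$ would force $p\equiv 1\pmod{\kp+\km}$, contradicting the hypothesis. So $p<q$, and the plan goes through.
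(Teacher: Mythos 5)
Your proof is correct and follows essentially the same route as the paper's: verify that $\kp+\km$ is invertible modulo $p$, check that $d=p$ divides $(\kp+\km)n+1$ with $\gcd(p,\kp\#)=1$, and apply Corollary \ref{cor:nonexist} using $p\not\equiv 1\pmod{\kp+\km}$ to rule out integrality of the quotient. Your closing remark about $p<q$ is a harmless extra observation already subsumed by the congruence argument.
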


\begin{proof}
We start by noting that $1\leq \km<\kp<p$ and $p\neq \kp+\km$ which
means $p \nmid \kp+\km$ and so $\kp+\km$ has a multiplicative
inverse in $\Z_p$. If
\[n\equiv -(\kp+\km)^{-1} \pmod{p}\]
then
\[(\kp+\km)n+1\equiv 0 \pmod{p}.\]
Thus, $p~|~(\kp+\km)n+1$.
However,
\[p\not\equiv 1\pmod{\kp+\km}\]
implies
\[(\kp+\km)n+1-p\not\equiv 0 \pmod{(\kp+\km)p}.\]
Since $p>\kp$ we must have $\gcd(p,\kp\#)=1$. We now use Corollary
\ref{cor:nonexist} with $d=p$.
\end{proof}

Even though Corollary \ref{cor:construct} was phrased as a recursive
construction, it can also be used to prove the non-existence of a
lattice tiling, as shown in the following theorem.

\begin{theorem}
Let $p$ be a prime, $p\equiv 1\pmod{\kp+\km}$. If the
$(\kp,\km,n)$-quasi-cross does not lattice tile $\R^n$, then the
$(\kp,\km,n')$-quasi-cross does not lattice tile $\R^{n'}$,
\[n'=\frac{\parenv{(\kp+\km)n+1}p^i-1}{\kp+\km},\]
for all positive integers $i$.
\end{theorem}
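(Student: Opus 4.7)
The plan is to argue the contrapositive: assume the $(\kp,\km,n')$-quasi-cross does lattice tile $\R^{n'}$, and deduce that the $(\kp,\km,n)$-quasi-cross lattice tiles $\R^n$, contradicting the hypothesis. By the equivalence between lattice tilings and Abelian-group splittings, together with \Tref{th:cyclic}, the assumed tiling of $\R^{n'}$ corresponds to a splitting $\Z_{q'}=(M,S)$ where $M=[-\km,\kp]^*$ and $q'=(\kp+\km)n'+1$. Substituting the expression for $n'$ gives the clean factorization
\[
q' \;=\; (\kp+\km)\cdot\frac{\parenv{(\kp+\km)n+1}p^i-1}{\kp+\km}+1\;=\;\parenv{(\kp+\km)n+1}\,p^i.
\]

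The main move is to apply \Tref{th:div} in one shot with $d=p^i$. This requires two things: $d\mid q'$, which is immediate from the factorization, and $\gcd(d,\kp\#)=1$. For the latter, the standing hypothesis $1\le\km<\kp$ gives $\kp+\km\geq 3$, and the assumption $p\equiv 1\pmod{\kp+\km}$ together with $p$ prime forces $p\geq \kp+\km+1>\kp$. Hence $p$ is coprime to every prime up to $\kp$, so $\gcd(p^i,\kp\#)=1$.

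Applying \Tref{th:div} then yields a splitting $\Z_{q'/d}=(M,S')$, and $q'/d=(\kp+\km)n+1$. Translating back through the splitting/tiling equivalence, this splitting produces a lattice tiling of $\R^n$ by $(\kp,\km,n)$-quasi-crosses, contradicting the hypothesis that no such tiling exists. This completes the contrapositive and hence the theorem.

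There is no real obstacle here; the argument is essentially bookkeeping layered on top of \Tref{th:div}. The only point that must be checked carefully is the inequality $p>\kp$, which depends on combining $\kp+\km\geq 3$, $p$ prime, and $p\equiv 1\pmod{\kp+\km}$. Note also that peeling off all of $p^i$ at once is cleaner than iterating \Tref{th:div} $i$ times, since the hypothesis $\gcd(p^i,\kp\#)=1$ is just as easy to verify as $\gcd(p,\kp\#)=1$, and the quotient $q'/p^i$ has the desired form directly.
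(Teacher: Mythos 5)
Your proof is correct and follows essentially the same route as the paper: reduce the hypothetical splitting $\Z_{q'}=(M,S)$ with $q'=\parenv{(\kp+\km)n+1}p^i$ back down to a splitting of $\Z_{(\kp+\km)n+1}$ via \Tref{th:div}, using $p\equiv 1\pmod{\kp+\km}$ to get $p>\kp$ and hence $\gcd(p,\kp\#)=1$. The only (harmless) difference is that you apply \Tref{th:div} once with $d=p^i$, whereas the paper iterates Corollary \ref{cor:construct} with $d=p$ a total of $i$ times; both yield $\abs{S'}=n$ and the desired contradiction.
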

\begin{proof}
Assume to the contrary there is a lattice tiling of $\R^{n'}$ by
$(\kp,\km,n')$-quasi-crosses, where $n'=pn+\frac{p-1}{\kp+\km}$. We note that
$p~|~(\kp+\km)n'+1$, and that $p>\kp$ and so $\gcd(p,\kp\#)=1$. We now use
Corollary \ref{cor:construct} and get that there must be a lattice tiling
of $\R^{n''}$ by $(\kp,\km,n'')$-quasi-crosses, where
\[n''=\frac{(\kp+\km)n'+1-p}{(\kp+\km)p}=n,\]
a contradiction. Thus, there is not lattice tiling of $\R^{n'}$ by
$(\kp,\km,n')$-quasi-crosses. Repeating this argument $i$ times, for
any positive integer $i$, completes the proof.
\end{proof}

\subsection{Accounting for Zero Divisors}

The final approach we discuss is that of accounting for the way zero
divisors of the split Abelian group are represented, resulting in a
strong non-existence result.

\begin{theorem}
\label{th:psquare}
Let $p$ be a prime, and let $\kp$ and $\km$ be non-negative integers
such that $\km\leq \kp$ and $p\leq \kp < p^2$. Then the
$(\kp,\km,n)$-quasi-cross does not lattice tile $\R^n$ when
$(\kp+\km)n+1\equiv 0 \pmod{p^2}$ unless
\[n=\frac{p-1}{(\kp\bmod p)+(\km\bmod p)}.\]
\end{theorem}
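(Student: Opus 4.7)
The plan is to assume, for contradiction, a splitting $\Z_q = (M, S)$ with $M = [-\km, \kp]^*$ and $q = (\kp+\km)n+1 = p^2 t$, and then to force $n = (p-1)/(r+r')$ where $r = \kp \bmod p$ and $r' = \km \bmod p$. Write $\kp = ap + r$ and $\km = a'p + r'$ with $0 \leq r,r' < p$, and partition $M = M_0 \sqcup M_1$ according to divisibility by $p$. Since $\kp < p^2$, every $m \in M$ has $v_p(m) \in \{0,1\}$, so $|M_0| = a+a'$ and $|M_1| = (a+a')(p-1) + r+r'$. Analogously set $S_0 = \{s \in S : p \mid s\}$ and $S_{00} = \{s \in S : p^2 \mid s\} \subseteq S_0$, with sizes $k = |S_0|$ and $\ell = |S_{00}|$; note $\ell \leq k$ holds automatically.

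Next I would count representations of the nonzero multiples of $p$ and of $p^2$ in $\Z_q$. Inclusion–exclusion on pairs $(m,s)$ with $p \mid ms$ yields $(a+a')n + [(a+a')(p-1)+r+r']\,k = q/p - 1$, and the analogous count for $p^2 \mid ms$, using $v_p(m) \leq 1$ to split the contributing pairs into $(M_0 \times S_0) \sqcup (M_1 \times S_{00})$, gives $(a+a')k + [(a+a')(p-1)+r+r']\,\ell = q/p^2 - 1$. Multiplying the first of these by $p$, using $\mu n + 1 = p^2 t$ with $\mu = \kp + \km$, and noting $\mu - p(a+a') = r+r'$, the $p^2 t$ terms cancel and a brief computation produces the master identity
\[
(r+r')\, n \;=\; p\bigl[(a+a')(p-1)+r+r'\bigr]\, k \;+\; (p-1).
\]
In particular $k = 0$ holds if and only if $n = (p-1)/(r+r')$, which is the stated exception.

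It remains to rule out $k \geq 1$. Substituting $k$ and $\ell$ from the two count equations into $\ell \leq k$ and using both $\mu n + 1 = p^2 t$ and the ``coprime-to-$p$'' count $|M_1|(n-k) = p(p-1)t$, the positive factor $t(p-1)$ cancels and the inequality collapses to the clean condition $r+r' \geq a+a'$; when $a+a' > r+r'$ this is a contradiction and no splitting can exist. The main obstacle is the boundary regime $a+a' \leq r+r'$, where the two count equations and $\ell \leq k$ are jointly compatible with integer solutions having $k \geq 1$—for instance $(p,\kp,\km)=(3,4,4)$ admits the consistent $k=\ell=1$, $n=10$. I would close this by combining the master identity with the recursion coming from the subgroup identification $p^2\Z_q \cong \Z_t$: in the equality case $a+a'=r+r'$, the forced equality $\ell=k$ means $S_0=S_{00}$, and the map $s \mapsto s/p^2$ induces a smaller splitting $\Z_t = (M, S')$ of dimension $k$, to which the theorem applies by induction on $v_p(q)$; the induction drives $k$ down through successive appearances of the exception and unwinds back to $n = (p-1)/(r+r')$. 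The strict case $a+a' < r+r'$ is handled in parallel by examining the master identity modulo $p|M_1|$ and using integrality of $k$ to eliminate all $k \geq 1$ except those reducing to the exception.
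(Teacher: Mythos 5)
Your first-level count and the resulting master identity $(r+r')\,n = p\bigl[(a+a')(p-1)+r+r'\bigr]k + (p-1)$ are correct, and you have correctly located the crux: the theorem reduces to showing $k=|S_0|=0$. But that is exactly the step you do not prove. The second-level count together with $\ell\le k$ closes only the regime $a+a'>r+r'$, while the instances the paper actually needs (e.g.\ $\kp=3$, $\km=1$, $p=3$, where $a+a'=r+r'=1$, and $\kp=3$, $\km=2$, $p=3$, where $a+a'=1<2=r+r'$) fall squarely in the regime you leave open. Your proposed repair does not work as described: (i) the induced splitting $\Z_t=(M,S'')$ has $v_p(t)=v_p(q)-2$, which may be $0$ or $1$, so the theorem cannot be invoked inductively on it; (ii) even when it can, the inductive conclusion is that the \emph{smaller} dimension equals the exceptional value, and substituting that back into the master identity produces a specific admissible $n\neq(p-1)/(r+r')$ rather than a contradiction; (iii) ``examining the master identity modulo $p|M_1|$'' is not an argument, and your own example $(p,\kp,\km)=(3,4,4)$, $n=10$, $q=81$ already shows that the counting constraints at \emph{every} level of the $p$-adic filtration can be simultaneously satisfied with $k\ge 1$, so no refinement of this counting scheme can close the gap on its own.

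For comparison, the paper's proof is a single-level count: it asserts that $S$ contains no multiple of $p$ --- equivalently, that every nonzero multiple of $p$ in $\Z_q$ is represented as $ms$ with $p\mid m$ --- extracting this from the order-$p$ subgroup $\frac{q}{p}\Z_p$ together with the fact that $p\in M$ (if $s$ were a nonzero multiple of $q/p$, then $ps\equiv 0\pmod q$ would contradict the splitting). Granting that claim, counting the $q/p-1$ nonzero multiples of $p$ gives $(\lfloor\kp/p\rfloor+\lfloor\km/p\rfloor)n=q/p-1$, which is precisely your identity with $k=0$. So the single missing ingredient in your write-up is a proof that no splitter is divisible by $p$; note that the exclusion argument via the order-$p$ subgroup only rules out splitters that are multiples of $q/p$, so to complete your proof you must bridge from ``no $s\in S$ is a multiple of $q/p$'' to ``no $s\in S$ is a multiple of $p$.'' That bridge is the real content of the theorem, and it is absent from your proposal.
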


\begin{proof}
Denote $q=(\kp+\km)n+1$, and assume to the contrary that $\Z_q=(M,S)$
is a splitting with $M=[-\km,\kp]^*$ and $q\equiv 0\pmod{p^2}$. Let us
consider the way the elements of
\[H=\frac{q}{p}\Z_p\setminus\mathset{0}=\mathset{\left. i\frac{q}{p} ~\right|~ 1\leq i\leq p-1}\]
are represented under this splitting.

We start by noting that $S\cap H=\emptyset$, for if some
$ip/q\in S$ then $p\cdot(ip/q)\equiv 0 \pmod{q}$ together with $p\in M$
contradict the splitting. We also note that all the elements of $H$ are
multiples of $p$, which is a zero divisor in $\Z_q$. Hence, every element of
$H$ is uniquely represented as $ms$, $m\in M$, $s\in S$, where
$m$ is a multiple of $p$. It follows that the number of multiples of $p$
in $M$ times the size of $S$ equals the size of $H$, i.e.,
\[\parenv{\floorenv{\frac{\kp}{p}}+\floorenv{\frac{\km}{p}}}n
= \frac{q}{p}-1 = \frac{(\kp+\km)n+1-p}{p}.\]
Hence, there is no splitting unless
\[n=\frac{p-1}{(\kp\bmod p)+(\km\bmod p)}.\]
\end{proof}

Theorem \ref{th:psquare} results in the following two corollaries.
\begin{corollary}
The $(3,1,n)$-quasi-cross does not lattice tile $\R^n$ when
$n\equiv 2\pmod{9}$, $n>2$.
\end{corollary}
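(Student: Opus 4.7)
The plan is to simply invoke Theorem \ref{th:psquare} with the parameter choice $p=3$, $\kp=3$, $\km=1$. I would first verify the hypotheses: $\km=1\leq\kp=3$, and $p=3\leq \kp=3<9=p^2$, so the theorem applies.

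Next I would translate the divisibility hypothesis. With these parameters, $(\kp+\km)n+1=4n+1$, and the condition $(\kp+\km)n+1\equiv 0\pmod{p^2}$ becomes $4n+1\equiv 0\pmod 9$, i.e. $4n\equiv -1\equiv 8\pmod 9$. Multiplying by the inverse of $4$ modulo $9$ (which is $7$, since $4\cdot 7=28\equiv 1$), this is equivalent to $n\equiv 2\pmod 9$. So the congruence appearing in the corollary is exactly the hypothesis of Theorem \ref{th:psquare} for this choice of parameters.

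Finally, I would compute the exceptional dimension allowed by Theorem \ref{th:psquare}. Here $\kp\bmod p=3\bmod 3=0$ and $\km\bmod p=1\bmod 3=1$, so
\[
\frac{p-1}{(\kp\bmod p)+(\km\bmod p)}=\frac{2}{0+1}=2.
\]
Thus Theorem \ref{th:psquare} rules out lattice tilings for every $n\equiv 2\pmod 9$ other than $n=2$. The restriction $n>2$ in the statement of the corollary excludes precisely this exceptional dimension, completing the proof.

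There is essentially no obstacle: the work is entirely a parameter substitution plus the arithmetic verification that $n\equiv 2\pmod 9$ is the same as $4n+1\equiv 0\pmod 9$ and that the exceptional value evaluates to $2$.
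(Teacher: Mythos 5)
Your proposal is correct and is exactly the paper's argument: the paper's proof is the one-line ``Apply Theorem~\ref{th:psquare} with $p=3$,'' and your write-up simply fills in the parameter checks, the equivalence $4n+1\equiv 0\pmod 9 \iff n\equiv 2\pmod 9$, and the exceptional value $n=2$, all correctly.
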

\begin{proof}
Apply Theorem \ref{th:psquare} with $p=3$.
\end{proof}

\begin{corollary}
\label{cor:psq}
The $(3,2,n)$-quasi-cross does not lattice tile $\R^n$ when
\begin{enumerate}
\item
  $n\equiv 3 \pmod{4}$, or
\item
  $n\equiv 7 \pmod{9}$.
\end{enumerate}
\end{corollary}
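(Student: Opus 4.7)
The plan is to deduce both parts directly from Theorem \ref{th:psquare} by specializing to $(\kp,\km)=(3,2)$ with $p=2$ and $p=3$ respectively.

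For part (1), I would take $p=2$ in Theorem \ref{th:psquare}. The hypothesis $p\leq \kp < p^2$ becomes $2\leq 3 < 4$, which is satisfied. Since $\kp+\km=5$, the divisibility condition $(\kp+\km)n+1\equiv 0 \pmod{p^2}$ becomes $5n+1\equiv 0\pmod 4$, i.e., $n\equiv 3\pmod 4$. The exceptional dimension is
\[n=\frac{p-1}{(\kp\bmod p)+(\km\bmod p)}=\frac{1}{1+0}=1,\]
which does not satisfy $n\equiv 3\pmod 4$, so no case in the congruence class is excluded.

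For part (2), I would take $p=3$ in Theorem \ref{th:psquare}. Now $p\leq \kp < p^2$ reads $3\leq 3 < 9$, which holds. The condition $5n+1\equiv 0\pmod 9$ is equivalent to $n\equiv 7\pmod 9$ (since $5^{-1}\equiv 2\pmod 9$ and $2\cdot 8=16\equiv 7$). The exceptional dimension this time is
\[n=\frac{p-1}{(\kp\bmod p)+(\km\bmod p)}=\frac{2}{0+2}=1,\]
which again does not satisfy $n\equiv 7\pmod 9$, so the exception is vacuous.

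There is no real obstacle here; the entire content is in verifying that in each case the forbidden dimension produced by Theorem \ref{th:psquare} lies outside the target residue class. Both checks reduce to a one-line arithmetic verification that $n=1$ is not congruent to $3\pmod 4$ nor to $7\pmod 9$.
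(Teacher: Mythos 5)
Your proof is correct and follows exactly the paper's route: the paper's proof is the one-liner ``Apply Theorem~\ref{th:psquare} with $p=2,3$ respectively,'' and your write-up simply makes explicit the arithmetic (the translation of $5n+1\equiv 0\pmod{p^2}$ into the stated residue classes and the check that the exceptional dimension $n=1$ falls outside them) that the paper leaves implicit.
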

\begin{proof}
Apply Theorem \ref{th:psquare} with $p=2,3$ respectively.
\end{proof}


\section{Conclusion}
\label{sec:conc}

In this work we showed, using a variety of techniques, several
necessary conditions for a quasi-cross of a given size to lattice tile
$\R^n$. Some of the results apply to general quasi-crosses, while
others are specific to quasi-crosses of small size. To conclude we
shall aggregate the results for the smallest unclassified
cases of the $(3,1,n)$-quasi-cross and the $(3,2,n)$-quasi-cross.

For the first shape, the $(3,1,n)$-quasi-cross, we recall there exists
a construction of lattice tilings from \cite{Sch12} for dimensions
$n=(5^i-1)/4$, $i\geq 1$. In addition, certain primes were shown in
\cite{YarKloBos13} to induce lattice tilings, as well as a recursive
construction, though a closed analytic form for the dimension appears
to be hard to obtain. Using a computer to verify the requirements for
the construction from \cite{YarKloBos13}, for $n\leq 250$ we also have
lattice tilings of $\R^n$ by $(3,1,n)$-quasi-crosses for dimensions
\[n=37,43,97,102,115,139,163,169,186,199,216.\]

On the other hand, combining the non-existence results with a nice
analytic form we achieved the following:

\begin{corollary}
If the $(3,1,n)$-quasi-cross lattice tiles $\R^n$ then $n\not\equiv
2\pmod{3}$.
\end{corollary}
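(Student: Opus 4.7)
The plan is to reduce modulo $9$ rather than modulo $3$, and then dispatch each residue class using results already established in the paper. If $n\equiv 2\pmod{3}$, then $n\equiv 2, 5, \text{ or } 8\pmod{9}$, so it suffices to rule out these three arithmetic progressions.

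For $n\equiv 5\pmod{9}$ and $n\equiv 8\pmod{9}$, I would invoke \Tref{th:char} with $k=1$: in that theorem the quasi-cross under consideration is precisely $(4k-1,1,n)=(3,1,n)$, and the forbidden residues $kn\equiv 5,8\pmod{9}$ become $n\equiv 5,8\pmod{9}$. Both progressions are therefore handled in a single sentence.

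For $n\equiv 2\pmod{9}$, the corollary of \Tref{th:psquare} (applied with $p=3$) gives non-existence whenever $n>2$. The one remaining value is $n=2$, which is not covered by that corollary but which I would dispose of separately using the geometric obstruction in \Tref{th:geometry}: evaluating $(2\kp(\km+1)-\km^2)/(\kp+\km) = 11/4 = 2.75$, we get $2.75>2=n$, so no lattice tiling of $\R^2$ by a $(3,1,2)$-quasi-cross exists.

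Taking the contrapositive then yields: if a lattice tiling of $\R^n$ by $(3,1,n)$-quasi-crosses exists, then $n$ does not lie in any of the residue classes $2,5,8\pmod{9}$, i.e.\ $n\not\equiv 2\pmod{3}$. There is no real obstacle here — the work has been done in the earlier theorems; the only nuance is remembering to treat $n=2$ via \Tref{th:geometry} because the $p=3$ corollary explicitly excludes that edge case.
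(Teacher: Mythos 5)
Your proposal is correct and follows exactly the same route as the paper's proof: split $n\equiv 2\pmod 3$ into the residues $2,5,8\pmod 9$, handle $5,8$ via Theorem~\ref{th:char}, handle $2\pmod 9$ with $n>2$ via Theorem~\ref{th:psquare}, and dispose of $n=2$ via Theorem~\ref{th:geometry}. The explicit check $11/4>2$ for the geometric bound is accurate, so nothing is missing.
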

\begin{proof}
The case of $n\equiv 5,8\pmod{9}$ is ruled out by Theorem
\ref{th:char}.  The case of $n\equiv 2\pmod{9}$, $n>2$, is ruled out
by Theorem \ref{th:psquare}. Finally, the case of $n=2$ is ruled out
by Theorem \ref{th:geometry}.
\end{proof}

However, especially for the $(3,1,n)$-quasi-cross, numerous other
non-existence results lacking a nice analytic form ensue from the
previous section. Aggregating the entire set of necessary conditions,
for $n\leq 250$, apart from the dimensions mentioned above allowing a
lattice tiling, no other lattice tiling of $\R^n$ by
$(3,1,n)$-quasi-crosses exists except perhaps in the remaining
unclassified cases of
\[n=22,24,60,111,114,121,144,220,234,235.\]

For the second shape, the $(3,2,n)$-quasi-cross, no lattice tiling is
known except for the trivial tiling of $\R^1$. The combined
non-existence results we obtained in this work, with a nice analytic
form, are much stronger in this case:

\begin{corollary}
If the $(3,2,n)$-quasi-cross lattice tiles $\R^n$ then $n\equiv
1,13\pmod{36}$.
\end{corollary}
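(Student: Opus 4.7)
The plan is to combine the modular necessary conditions established in the preceding section into a single congruence modulo $36 = \lcm(4,9,12,18)$. By Theorem \ref{th:cyclic}, any lattice tiling of $\R^n$ by $(3,2,n)$-quasi-crosses corresponds to a cyclic splitting $\Z_{5n+1} = (M, S)$ with $M = [-2,3]^*$. Since $M = [-(k-1), k]^*$ for $k=3$, Theorem \ref{th:kmo} forces $\gcd(3, 5n+1) \neq 1$, i.e., $n \equiv 1 \pmod 3$. Intersecting with $\Z_{36}$, this reduces the admissible residues to the twelve values
\[\{1, 4, 7, 10, 13, 16, 19, 22, 25, 28, 31, 34\} \pmod{36}.\]

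Next I would prune this list using Corollary \ref{cor:psq} and Corollary \ref{cor:32}. Corollary \ref{cor:psq}(1) rules out $n \equiv 3 \pmod 4$, eliminating $\{7, 19, 31\}$; Corollary \ref{cor:psq}(2) rules out $n \equiv 7 \pmod 9$, eliminating $\{16, 25, 34\}$; and Corollary \ref{cor:32}(2) rules out $n \equiv 4, 10 \pmod{18}$, eliminating $\{4, 10, 22, 28\}$. The only survivors are $\{1, 13\} \pmod{36}$, which is the desired conclusion. As a sanity check, the remaining constraints of Corollary \ref{cor:32}, namely $n \not\equiv 5, 9 \pmod{12}$ and $n \not\equiv 15, 23 \pmod{24}$, are automatically satisfied by these two residues and so contribute nothing further.

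I do not foresee any real obstacle: the argument is a purely combinatorial sifting that simply aggregates modular obstructions already proven in the paper. The only care needed is to verify that precisely the congruences with moduli dividing $36$ suffice to isolate the two target residues, and to note that although many other non-modular necessary conditions appear in the previous section (for instance those arising from Theorem \ref{th:vandermonde} and the various zero-divisor divisibility conditions), none of them provides a further systematic reduction that can be packaged cleanly modulo $36$. Consequently, the two residues $1$ and $13$ modulo $36$ are exactly the correct aggregated output of the available modular constraints.
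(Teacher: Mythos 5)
Your proof is correct and follows exactly the paper's route: the author likewise combines Theorem \ref{th:kmo} (giving $5n+1\equiv 0\pmod 3$, i.e.\ $n\equiv 1\pmod 3$), Corollary \ref{cor:32} part 2, and Corollary \ref{cor:psq} to sift the residues modulo $36$ down to $1$ and $13$. Your version merely spells out the elimination explicitly, and your arithmetic checks out.
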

\begin{proof}
This is a simple combination of Theorem \ref{th:kmo} stating
$5n+1\equiv 0\pmod{3}$, of Corollary \ref{cor:32} stating
$n\not\equiv 4,10\pmod{18}$, and of Corollary \ref{cor:psq}.
\end{proof}

Aggregating this result with the other recursive necessary conditions,
for $2\leq n\leq 250$, no lattice tiling of $\R^n$ by
$(3,2,n)$-quasi-crosses exists except perhaps in the remaining
unclassified cases of
\[n=13,37,49,73,85,121,145,157,181,217,229.\]

\bibliographystyle{elsarticle-num}
\bibliography{allbib}

\begin{thebibliography}{10}
\expandafter\ifx\csname url\endcsname\relax
  \def\url#1{\texttt{#1}}\fi
\expandafter\ifx\csname urlprefix\endcsname\relax\def\urlprefix{URL }\fi

\bibitem{BuzEtz12}
S.~Buzaglo, T.~Etzion, Tilings with {$n$}-dimensional chairs and their
  applications to asymmetric codesSubmitted to the IEEE Trans.~on
  Inform.~Theory. URL: http://arxiv.org/pdf/1204.4204v3.

\bibitem{CasSchBohBru10}
Y.~Cassuto, M.~Schwartz, V.~Bohossian, J.~Bruck, Codes for asymmetric
  limited-magnitude errors with applications to multilevel flash memories, IEEE
  Trans.~on Inform.~Theory 56~(4) (2010) 1582--1595.

\bibitem{Etz11}
T.~Etzion, Product constructions for perfect {L}ee codes, IEEE Trans.~on
  Inform.~Theory 57~(11) (2011) 7473--7481.

\bibitem{GolWel70}
S.~W. Golomb, L.~R. Welch, Perfect codes in the {L}ee metric and the packing of
  polyominoes, SIAM J.~Appl.~Math. 18~(2) (1970) 302--317.

\bibitem{HarWri08}
G.~H. Hardy, E.~M. Wright, An Introduction to the Theory of Numbers, 6th
  Edition, Oxford University Press, 2008.

\bibitem{HicSte86}
D.~Hickerson, S.~Stein, Abelian groups and packing by semicrosses, Pacific
  J.~Math. 122~(1) (1986) 95--109.

\bibitem{HorAlb12}
P.~Horak, B.~F. AlBdaiwi, Diameter perfect {L}ee codes, IEEE Trans.~on
  Inform.~Theory 58 (2012) 5490--5499.

\bibitem{Hor09b}
P.~Horak, On perfect {L}ee codes, Discrete Math. 309 (2009) 5551--5561.

\bibitem{Hor09a}
P.~Horak, Tilings in {L}ee metric, European J.~of Combin. 30 (2009) 480--489.

\bibitem{Min07}
H.~Minkowski, Diophantische Approximationen, Teubner, Leipzig, 1907.

\bibitem{Pos75}
K.~A. Post, Nonexistence theorems on perfect {L}ee codes over large alphabets,
  Inform.~and Control 29 (1975) 369--380.

\bibitem{Sch12}
M.~Schwartz, Quasi-cross lattice tilings with applications to flash memory,
  IEEE Trans.~on Inform.~Theory 58~(4) (2012) 2397--2405.

\bibitem{Ste67}
S.~K. Stein, Factoring by subsets, Pacific J.~Math. 22~(3) (1967) 523--541.

\bibitem{Ste84}
S.~Stein, Packings of {$R^n$} by certain error spheres, IEEE Trans.~on
  Inform.~Theory 30~(2) (1984) 356--363.

\bibitem{SteSza94}
S.~Stein, S.~Szab{\'o}, Algebra and Tiling, The Mathematical Association of
  America, 1994.

\bibitem{Tam98}
U.~Tamm, Splittings of cyclic groups and perfect shift codes, IEEE Trans.~on
  Inform.~Theory 44~(5) (1998) 2003--2009.

\bibitem{Tam05}
U.~Tamm, On perfect integer codes, in: Proceedings of the 2005 IEEE
  International Symposium on Information Theory (ISIT2005), Adelaide, SA,
  Australia, 2005, pp. 117--120.

\bibitem{YarKloBos13}
S.~Yari, T.~Kl{\o}ve, B.~Bose, Some codes correcting unbalanced errors of
  limited magnitude for flash memories, submitted to IEEE Trans.~on
  Inform.~Theory.

\end{thebibliography}

\end{document}